\newenvironment{proof}{{\bf Proof. } }{{\hfill $\Box$}\vspace{.5pc}}
\newtheorem{theorem}{Theorem}[section]
\newtheorem{corollary}[theorem]{Corollary}
\newtheorem{definition}[theorem]{Definition}
\newtheorem{lemma}[theorem]{Lemma}
\newtheorem{remark}[theorem]{Remark}
\newtheorem{example}[theorem]{Example}
\newcommand{\mR}{\mathcal{R}}
\newcommand{\tO}{\mathtt{O}}
\newcommand{\BEGLIST}{\begin{list}{}{\partopsep -3pt \parsep -2pt}}
\newcommand{\ENDLIST}{\end{list}}
\newcommand{\ie}{\emph{i.e., }}
\newcommand{\eg}{\emph{e.g., }}
\begin{document}

\title{Deterministic Leader Election Among Disoriented Anonymous Sensors}
\author{
  Yoann Dieudonn\'e$^\dagger$, Florence Lev\'e$^\dagger$, Franck Petit$^\ddagger$, and Vincent Villain$^\dagger$ \medskip \\
  $^\dagger$ MIS Laboratory, University of Picardie Jules Verne, France\\
  $^\ddagger$ LIP6/Regal, Universit\'e Pierre et Marie Curie, INRIA, CNRS, France
}
\date{}
\maketitle

\begin{abstract}
We address the Leader Election (LE) problem in networks of anonymous sensors
sharing no kind of common coordinate system. Leader Election is a fundamental symmetry breaking problem in distributed computing. Its goal is to assign value 1 (leader) to one of the entities and value 0 (non-leader) to all others.

In this paper, assuming $n>1$ disoriented anonymous sensors, we
provide a complete characterization on the sensors positions to
deterministically elect a leader, provided that all the sensors'
positions are known by every sensor.

More precisely, our contribution is twofold: First, assuming $n$ anonymous
sensors agreeing on a common \emph{handedness} (\emph{chirality}) of their own coordinate system, we provide a complete characterization on the sensors positions to deterministically elect a leader.
Second, we also provide such a complete chararacterization for sensors devoided of a common handedness.

Both characterizations rely on a particular object from combinatorics on words, namely the \emph{Lyndon Words}.
\bigskip

\textbf{Keywords}: Distributed Leader Election, Sense of Direction, Chirality, Sensor Networks, Lyndon Words.

\end{abstract}

%
%
%
%
%
%
%
%


\section{Introduction}

In distributed settings, many problems that are hard to solve become easier 
to solve with a \emph{leader} to coordinate the system.  
The problem of electing a leader among a set of computing units is then one of the 
fundamental tasks in distributed systems.  
The \emph{Leader Election} (LE) Problem consists in distinguishing among a set of entities exactly one of them. 
The leader election problem is covered in depth in many books related to distributed 
systems, \eg \cite{L96,S07}. 

The distributed systems considered in this paper are \emph{sensor networks}.  Sensor networks
are dense wireless networks that are used to collect (to sense) environmental data such as 
temperature, sound, vibration, pressure, motion, etc.  The data are either simply sent toward some data collectors 
or used as an input to perform some basic cooperative tasks.  
Wireless Sensor Networks (WSN) are emerging distributed systems providing diverse services to numerous 
applications in industries, manufacturing, security, environment and habitat monitoring, healthcare, traffic control, etc.  
WSN aim for being composed of a large quantity of sensors as small, inexpensive, and low-powered as possible.  
Thus, the interest has shifted towards the design of distributed protocols for very weak sensors, \ie 
sensors requiring very limited capabilities, \eg \emph{uniformity} (or, \emph{homogeneity} --- all the sensors
follow the same program ---, \emph{anonymity} --- the sensors are \emph{a priori} indistinguishable ---,
\emph{disorientation} --- the sensors share no kind of coordinate system nor common sense of direction. 

However, in weak distributed environments, many tasks have no solution.  In particular, the \emph{Pattern Formation} problem for sensors having the additional capability of \emph{mobility} is not always solvable. 
The Pattern Formation problem consists in the design of protocols allowing autonomous mobile sensors to form 
a specific class of patterns, \eg \cite{SY99,FPSW99,FPSW01,DK02,K05,DP07a,DieudonneLP08}. Such mobile sensors are often 
referred to as \emph{robots} or \emph{agents}. 
In~\cite{FPSW99}, the authors discuss whether the pattern formation problem can be solved or not
according to the capabilities the robots are supposed to have.  
They consider the ability to agree on the direction and orientation of one
axis of their coordinate system (North) (Sense of Direction) and a common \emph{handedness} (\emph{Chirality}). 
Assuming sense of direction, chirality, and \emph{Unlimited Visibility} --- each robot is able to locate all
the robots ---, they show that the robots can form any arbitrary pattern.  Then, they show that
with the lack of chirality, the problem can be solved in general with an odd number of robots only.  With 
the lack of both sense of direction and chirality, the pattern formation problem is unsolvable in general.

In~\cite{FPSW01} and \cite{DieudonnePV10}, the authors show the fundamental relationship between the Pattern Formation problem and the Leader 
Election problem. In the first paper, it is shown that, for a group of $n\geq3$ robots, pattern formation problem cannot be solved if the robots cannot elect a leader.
In the second one, the authors aim at knowing whether the reverse is true or not. In other words, if the robots can elect a leader, can they solve the pattern formation problem? The authors prove that this property holds for $n\geq4$ robots in the asynchronous model CORDA, provided the robots share the same chirality.

In~\cite{FPSW01}, they show that under sense of direction and chirality, 
Leader Election can be solved by constructing a total order over the coordinates of all the agents.  
By contrast, with no sense of direction and lack of chirality, Leader Election is unsolvable (in general).
Informally, the results in~\cite{FPSW99,FPSW01} come from the fact 
that starting from a totally symmetric configuration where the positions of the robots coincide with the vertice of a regular $n$-gon, no robot can be distinguished. Nevertheless, no characterization of the geometric configurations allowing to elect a leader, among disoriented sensors, is given. 

In a first approach, the possibility of electing a leader among disoriented sensors seems to be related to
the absence of some kind of symmetry in the configuration. Besides, this way would be in line with the result of Angluin~\cite{A80} showing that, in uniform interconnected networks, the impossibility of breaking a possible symmetry in the initial configuration makes the leader election unsolvable deterministically. However, this approach does not turn out to be appropriate in our context. Indeed, as depicted in Figure~\ref{fig:po}, some symmetric configurations do not prevent the sensors from distinguishing a leader. This raises the following question : ``\emph{Given a set of weak sensors scattered on the plane, what are the (minimal) geometric conditions to be able to deterministically agree on a single sensor?}''

\begin{figure}[H]
\begin{center}
  \begin{minipage}[t]{0.4\linewidth}
    \centering
    \epsfig{file=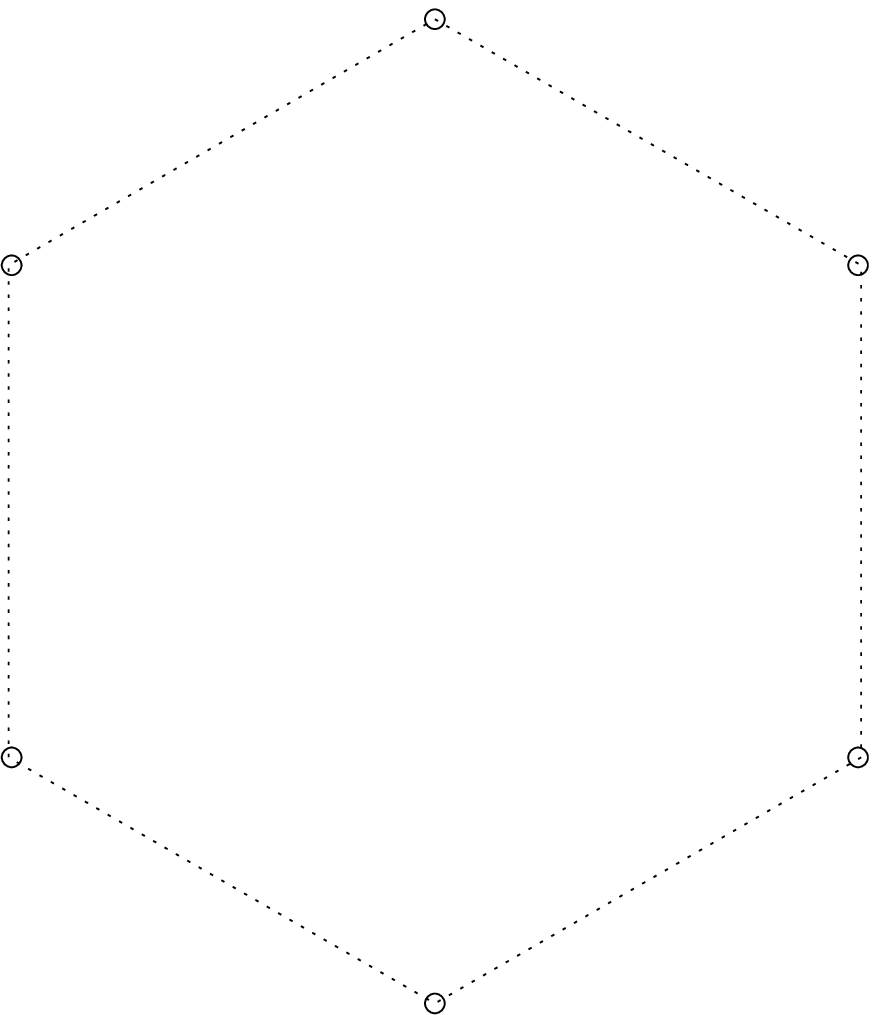, width=0.8\linewidth}\\
    {\footnotesize ($a$) Symmetric Configuration : a regular $6$-gon}
  \end{minipage}
  \begin{minipage}[t]{0.4\linewidth}
    \centering
    \epsfig{file=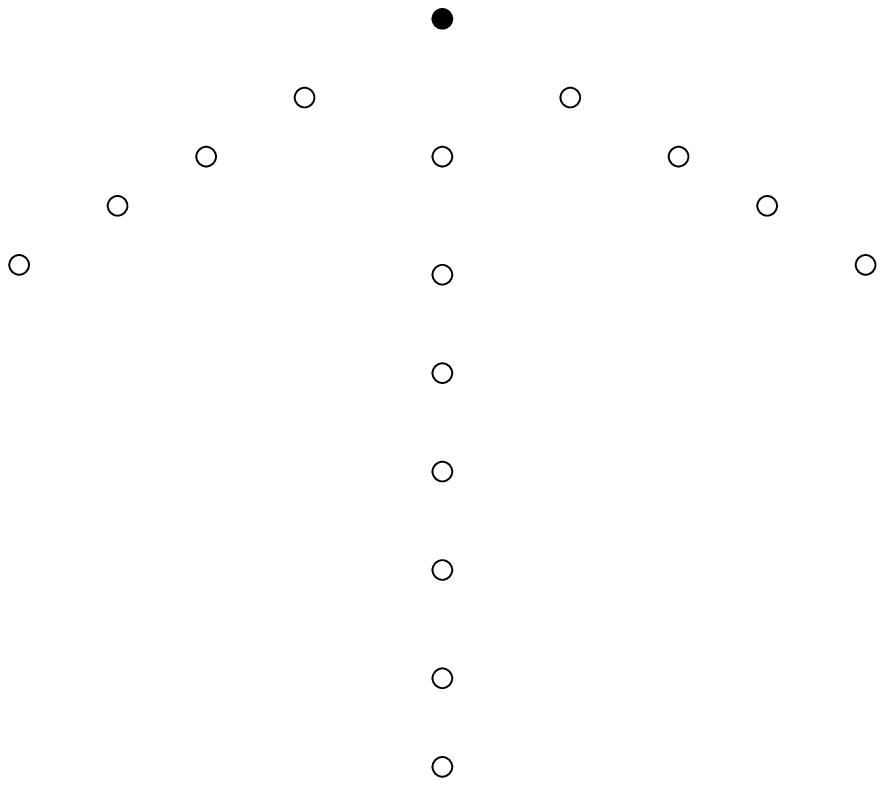, width=0.8\linewidth}\\
    {\footnotesize ($b$) Symmetric Configuration : an arrow}
  \end{minipage}
\end{center}
 \caption{Two examples of symmetric configuration. In Figure~\ref{fig:po}.$a$ no sensors can be distinguished. By contrast, in figure~\ref{fig:po}.$b$ every sensor on the vertical line can be distinguished: For example, the sensors can choose the one located at the top of the arrowhead (black bullet) to play the role of a leader.}
\label{fig:po}
\end{figure}

In this paper, this question is addressed under very weak assumptions: in particular, the sensors 
share no kind of common coordinate system.  More precisely, they are not required to share any 
unit measure, common orientation or direction and even any common sense of rotation, say chirality. 

We provide a complete characterization (necessary and sufficient conditions) 
on the sensors positions to deterministically elect a leader. 
Our result holds for any $n>1$, provided that the sensors know all the positions and are able to make real computations.  
The sufficient condition is shown by providing a deterministic algorithm electing a leader.

The proof is based on the ability for the sensors to construct a \emph{Lyndon word} from the sensors' positions
as an input.  A Lyndon word is a non-empty word strictly smaller in the lexicographic order
than any of its suffixes, except itself and the empty word.  
Lyndon words have been widely studied in the combinatorics of words area~\cite{L83}.  
However, only a few papers consider Lyndon words addressing issues in other 
areas than word algebra, \eg \cite{C04,DR04,SM90,DP07a}.  In~\cite{DP07a}, we 
already showed the power of Lyndon words to build an efficient and simple deterministic protocol to form a 
regular $n$-gon with a prime number $n$ of robots.

In the next section (Section~\ref{sec:prel}), we formally describe the distributed model and 
the words considered in this paper.  
In Section~\ref{sec:chi}, for convenience, we present a preliminary result assuming sensors with chirality. The lack of chirality is
addressed in Section~\ref{sec:nochi}.
Finally, we conclude this paper in Section~\ref{sec:conclu}.

\section{Preliminaries}
\label{sec:prel}

In this section, we define the distributed system considered in this paper.  Next, we review some formal 
definitions and basic results on words and Lyndon words.

\subsection{Model}
\label{sub:model}
Consider a set of $n>1$ \emph{sensors} arbitrarily scattered on the plane such that  
no two sensors are located at the same position.  
A \emph{configuration} ${\mathcal{C}}$ is a set of positions $p_1, \ldots, p_n$ occupied by the sensors.  
The sensors are \emph{uniform} and \emph{anonymous}, i.e, they all execute 
the same program using no local parameter (such as an identity) 
allowing to differentiate any of them.  
However, we assume that each sensor is a computational unit having the ability 
to determine the positions of the $n$ sensors within an infinite decimal precision.  
We assume no kind of communication medium.
Each sensor has its own local $x$-$y$ Cartesian coordinate system defined by two coordinate
axes ($x$ and $y$), together with their \emph{orientations}, 
identified as the positive and negative sides of the axes.  

In this paper, we assume that the sensors have no \emph{Sense of Direction} and we discuss the influence of \emph{Chirality}
in a sensor network.

\begin{definition}[Sense of Direction]
\label{def:sod}
A set of $n$ sensors has sense of direction if the $n$ sensors agree on a common
direction of one axis ($x$ or $y$) and its orientation.  
The sense of direction is said to be \emph{partial}
if the agreement relates to the direction only ---\ie 
they are not required to agree on the orientation. 
\end{definition}

In Figure~\ref{fig:ex}, the sensors have sense of direction in the cases~($a$) and~($b$), 
whereas they have no sense of direction in the cases~($c$) and~($d$).  

Given an $x$-$y$ Cartesian coordinate system, the \emph{handedness} is the way in which 
the orientation of the $y$ axis (respectively, the $x$ axis) is inferred according to 
the orientation of the $x$ axis (resp., the $y$ axis).  

\begin{definition}[Chirality]
A set of $n$ sensors has chirality (or the $n$ sensors have chirality) if the $n$ sensors share the same handedness.
\end{definition}

In Figure~\ref{fig:ex}, the sensors have chirality in the cases~($a$) and~($c$), 
whereas they have no chirality in the cases~($b$) and~($d$).  

\begin{figure}[H]
\begin{center}
  \begin{minipage}[t]{0.4\linewidth}
    \centering
\includegraphics{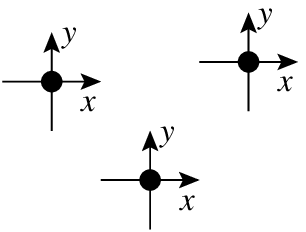}\\
    {\footnotesize ($a$) Sense of Direction and Chirality}
  \end{minipage}
  \begin{minipage}[t]{0.4\linewidth}
    \centering
\includegraphics{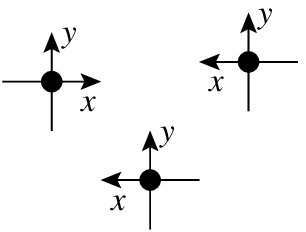}\\
    {\footnotesize ($b$) Sense of Direction and No Chirality}
  \end{minipage}
  \bigskip
\\
  \begin{minipage}[t]{0.4\linewidth}
    \centering
\includegraphics{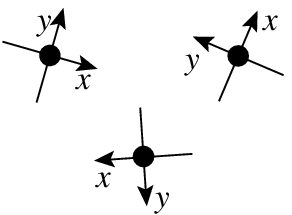}\\
    {\footnotesize ($c$) No Sense of Direction and Chirality}
  \end{minipage}
  \begin{minipage}[t]{0.4\linewidth}
    \centering
\includegraphics{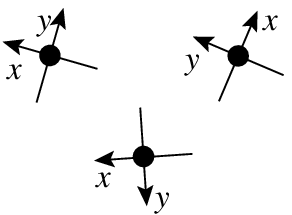}\\ 
  {\footnotesize ($d$) No Sense of Direction and No Chirality}
  \end{minipage}
\end{center}
 \caption{Four examples showing the relationship between Sense of Direction and Chirality}
\label{fig:ex}
\end{figure}

\subsection{Words and Lyndon Words}
\label{sub:LW}

Let an alphabet $A$ be a finite set of letters. 
A non empty \emph{word} $w$ over $A$ is 
a finite sequence of letters $a_1,\ldots,a_i,\ldots,a_\ell$, $\ell>0$.  
The \emph{concatenation} of two words $u$ and $v$, denoted simply by $uv$, is equal to the word
$a_1,\ldots,a_i,\ldots, a_k,b_1, \ldots, b_j,\ldots,b_\ell$ such that $u=a_1,\ldots,a_i,\ldots, a_k$ and 
$v=b_1, \ldots, b_j,\ldots,b_\ell$.  
Let $\varepsilon$ be the \emph{empty word} such that for every word $w$, $w\varepsilon = \varepsilon w = w$.  
The \emph{length} of a word $w$, denoted by $|w|$, is equal to the number of letters of $w$ ($|\varepsilon|=0$).  

Suppose the alphabet $A$ is totally ordered by the relation $\prec$.
Then a word $u$ is said to be \emph{lexicographically} smaller than or equal to a word $v$, 
denoted by $u \preceq v$, iff there exists either a word $w$ such that $v=uw$ or 
three words $r,s,t$ and two letters $a,b$ 
such that $u=ras$, $v=rbt$, and $a\prec b$. We write $u \prec v$ when $u \preceq v$ and $u \neq v$.

Let $k$ and $j$ be two positive integers.  The {\it $k^{th}$ power} of a word $s$ is 
the word denoted by $s^k$ such that $s^0 = \varepsilon$, and $s^k = s^{k-1} s$.
A word $u$ is said to be \emph{primitive} if and only if $u = v^k \Rightarrow k=1$.
Otherwise ($u = v^k$ and $k>1$), $u$ is said to be \emph{strictly periodic}. 


\medskip
The \emph{$j^{th}$ rotation} of a word $w$, notation $Rot_j (w)$, is defined by:

$$ 
Rot_j (w) \stackrel{\mathrm{def}}{=}
\left\{
  \begin{array}{ll}
     \varepsilon & \mbox{if } w = \varepsilon \\
     a_j,\ldots, a_\ell,a_1,\ldots,a_{j-1} & \mbox{otherwise } ( w = a_1,\ldots,a_\ell,\ \ell\geq 1 )
  \end{array}
\right.
$$
    
Note that $Rot_1 (w) = w$. 

\begin{lemma}{\rm \cite{L83}}
\label{lemma:MOT}
Let $w$ and $Rot_j(w)$ be a word and a rotation of $w$, respectively.  
The word $w$ is primitive if and only if  $Rot_j (w)$ is primitive.
\end{lemma}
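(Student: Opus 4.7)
The plan is to prove the contrapositive: $w$ is strictly periodic if and only if $Rot_j(w)$ is strictly periodic. Since primitivity is the negation of strict periodicity (given $w$ non-empty), this is equivalent to the stated lemma.

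First I would handle the forward direction. Suppose $w = u^k$ with $k > 1$ and let $m = |u|$, so $|w| = km$. Write $j - 1 = qm + r$ with $0 \le r < m$ (Euclidean division). Rotating by a multiple of $m$ just permutes the $k$ copies of $u$ and yields $w$ again, so I may assume $0 \le r < m$ and rotate by $r$ positions. Split $u = u_1 u_2$ with $|u_1| = r$. Then
\[
w = u^k = u_1 u_2 u_1 u_2 \cdots u_1 u_2 = u_1\,(u_2 u_1)^{k-1}\, u_2,
\]
and stripping off the prefix $u_1$ and appending it at the end gives
\[
Rot_j(w) = (u_2 u_1)^{k-1}\, u_2 u_1 = (u_2 u_1)^k.
\]
Since $k > 1$, the rotation $Rot_j(w)$ is also strictly periodic.

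For the converse, I would observe that the rotation operation is invertible in the sense that every rotation of $w$ can be rotated back to $w$: more precisely, with $\ell = |w|$, one has $Rot_{\ell - j + 2}(Rot_j(w)) = w$ (indices taken appropriately). Hence if $Rot_j(w)$ is strictly periodic, applying the forward direction to $Rot_j(w)$ with the rotation that brings it back to $w$ shows that $w$ itself is strictly periodic. Taking the contrapositive of both directions yields the equivalence for primitivity.

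The only delicate point is bookkeeping with the rotation indices (in particular the off-by-one caused by the convention $Rot_1(w) = w$); none of the algebra is substantial once the decomposition $w = u_1 (u_2 u_1)^{k-1} u_2$ is identified. I expect this indexing check to be the main, though minor, obstacle.
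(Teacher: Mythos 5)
Your proof is correct, but note that the paper does not prove this lemma at all: it is imported by citation from \cite{L83}, so there is no internal argument to compare against. What you give is the standard elementary argument, self-contained and using only the definitions in the paper: writing $w=(u_1u_2)^k$ with $|u_1|=r$, $j-1\equiv r \pmod{|u|}$, the rotation strips the prefix $u^q u_1$ and appends it, yielding $(u_2u_1)^k$ — a $k$-th power of a conjugate of $u$ — so rotations preserve strict periodicity; the converse follows since every rotation is undone by another rotation ($Rot_{\ell-j+2}$ inverts $Rot_j$, indices taken modulo $\ell$), and taking contrapositives gives the equivalence for primitivity. The index bookkeeping you flag is indeed the only delicate point and your handling of it is sound; the degenerate cases are harmless ($r=0$ gives $u_1=\varepsilon$ and the rotation is $w$ itself, and $w=\varepsilon$ is trivial since $Rot_j(\varepsilon)=\varepsilon$). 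For comparison, the textbook route in \cite{L83} typically deduces the statement from the conjugacy fact that $xy$ is primitive iff $yx$ is (itself proved via the commutation lemma: $uv=vu$ implies $u$ and $v$ are powers of a common word); your direct decomposition avoids that machinery at the cost of the modular-index argument, which is a reasonable trade and arguably the more transparent proof in the context of this paper.
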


A word $w$ is said to be {\it minimal} if and only if $\forall j \in 1,\ldots, \ell$,
$w \preceq Rot_j (w)$.

\begin{definition}[Lyndon Word]
\label{def:LW}
A word $w$ $(|w|>0)$ is a Lyndon word if and only if $w$ is nonempty, primitive and minimal, i.e.,
$w \neq \varepsilon$ and $\forall j \in 2,\ldots, |w|,\ w \prec  Rot_{j}(w)$.
\end{definition}

For instance, if $A=\{a,b\}$, then $a$, $b$, $ab$, $aab$, $abb$ are Lyndon words, whereas $aba$, and $abab$ are not ---
$aba$ is not minimal ($aab \preceq aba$) and $abab$ is not primitive ($abab = (ab)^2$). 

\subsection{Leader Election.}
\label{sub:LE}

The \emph{leader election} 
problem considered in this paper is stated as follows:
Given a configuration ${\mathcal{C}}$ of $n>1$ sensors, the $n$ sensors are able to deterministically
agree on a same sensor $L$ called the leader.

\section{Leader Election with Chirality}
\label{sec:chi}

In this section, we consider a sensor network having the property of chirality. The results we propose in this section have been presented in \cite{DieudonneP07}.

In Subsection~\ref{sub:confs&words}, we provide general definitions and establish some results with respect to configurations
and words. We then show in Subsection~\ref{sub:CHI&LE&LW} the relationship between the presence (the lack) of a
leader and the existence (the absence) of a Lyndon word.

\subsection{Configurations and Words}
\label{sub:confs&words}

Given a configuration ${\mathcal{C}}$, $SEC$ denotes the \emph{smallest enclosing circle} of the positions of the sensors in ${\mathcal{C}}$. The center
of $SEC$ is denoted by $\tO$.  The distance from $\tO$ to any point on $SEC$, the radius of $SEC$, is denoted
by $\sigma$. 
In any configuration ${\mathcal{C}}$, $SEC$ is unique and can be computed by any sensor in linear time~\cite{M83,W91}. 
It passes either through two of the positions that are on the same diameter (opposite positions), or through at least 
three of the positions in ${\mathcal{C}}$. 
Note that if $n=2$, then $SEC$ passes both sensors and no sensor can be located inside $SEC$, in particular at $\tO$.

Given a smallest enclosing circle $SEC$, the radii are the line segments 
from the center $\tO$ of $SEC$ to the boundary of $SEC$. 
Let $\mR$ be the finite set of radii such that a radius $r$ belongs to $\mR$ iff at least 
one sensor is located on $r$ but $\tO$.  
Denote by $\sharp\mR$ the number of radii in $\mR$.
In the sequel, we will abuse language by considering radii in $\mR$ only. 
Given two distinct positions $p_1$ and $p_2$ located on the same radius $r$ ($\in \mR$),
$d(p_1,p_2)$ denotes the Euclidean distance between $p_1$ and $p_2$.  

Since $\mR$ and the set of positions are finite, the set of different distances between positions of sensors on the radii in $\mR$ is finite. Every robot codes each of these distances $x$ using a fonction $Code(x)$ such that the output of $Code(x)$ is a letter of an arbitrary alphabet provided with an order relation, and the code respects the natural order on distances (i.e. for letters $a_1$, $a_2$ and distances $d_1$, $d_2$, $a_1 \leq a_2$ if and only if $d_1 \leq d_2$). A similar code is used for the angles in Definition~\ref{def:cw}.

\begin{definition}[Radius Word] 
\label{def:radword}
Let $p_0$ be the position of $\tO$, $p_1, \ldots, p_k$ be the respective positions of 
$k$ sensors $(k \geq 1)$ located on the same radius $r \in \mR$.
Let $\rho_r$ be the radius word such that 
$$ 
\rho_r \stackrel{\mathrm{def}}{=} 
\left\{
  \begin{array}{ll}
     0 & \mbox{ if there exists one sensor at } \tO \\
     a_1 a_2\ldots a_k\mbox{ s.t. }
        \forall i \in [1,k], a_i = Code(\frac{d(p_{i-1},p_i)}{\sigma}) & \mbox{ otherwise}
  \end{array}
\right.
$$

\end{definition}

Note that all the distances are computed by each sensor 
with respect to its own coordinate system, \ie proportionally to its own measure unit.
So, for any radius $r \in \mR$, all the sensors compute the same word $\rho_r$.
For instance, in Figure~\ref{fig:SC2}, for every sensor, $\rho_{r_1}=c$ and $\rho_{r_2} = \rho_{r_3} = ab$. 

\begin{remark}
\label{rem:radword}
If there exists one sensor on $\tO$ ($n > 2$), then for every radius $r \in \mR$, $\rho_r = 0$.
\end{remark}

Denote by $\circlearrowright$ an arbitrary orientation of $SEC$, \ie $\circlearrowright$ denotes either 
the clockwise or the counterclockwise direction\footnote{Note that the clockwise (resp. counterclockwise) direction may be the counterclockwise (resp. clockwise) direction for the robots depending on their handedness. However, without loss of generality, we can assume that both directions are those we commonly use.}.  
Given an orientation $\circlearrowright$ of $SEC$, $\circlearrowleft$ denotes the opposite orientation.
Let $r$ be a radius in $\mR$,
the successor of $r$ in the $\circlearrowright$ direction, denoted by $Succ(r,\circlearrowright)$, 
is the next radius in $\mR$, according to $\circlearrowright$. The $i^{th}$ successor of $r$, denoted by 
$Succ_i(r,\circlearrowright)$, is the radius such that $Succ_0(r,\circlearrowright)=r$, 
and $Succ_i(r,\circlearrowright)=Succ(Succ_{i-1}(r,\circlearrowright),\circlearrowright)$.
Given $r$ and its successor $r'= Succ(r,\circlearrowright)$,  
$\sphericalangle(r \tO r')^\circlearrowright$ denotes the angle between $r$ and $r'$ following the $\circlearrowright$ direction.

\begin{definition}[Configuration Word]
\label{def:cw}
Let $r$ be a radius in $\mR$.  
The configuration word of $r$ with respect to $\circlearrowright$, an arbitrary orientation of $SEC$, 
denoted by $\omega(r)^\circlearrowright$, is defined by:
$$
\omega(r)^\circlearrowright \stackrel{\mathrm{def}}{=}
\left\{
\begin{array}{ll}
(0,0) & \mbox{if } \rho_r = 0\\
(\rho_0,\alpha_0)(\rho_1,\alpha_1)\ldots(\rho_k,\alpha_k) \ \mbox{ with } k=\sharp\mR -1 &\mbox{otherwise}\\
\end{array}
\right.
$$
such that, $\rho_0$ is the radius word of $r$, $\alpha_0$ is a code (similar as for the distances) of the angle between $r$ and its successor in $\mR$ 
following $\circlearrowright$, and for every $i \in [1,k]$, 
$r_i$ being the $i^{th}$ successor of $r$ in $\mR$ following $\circlearrowright$,  
$\rho_i$ is the radius word of $r_i$ and $\alpha_i$ is the code of the angle between $r_i$ and its successor 
in $\mR$ following $\circlearrowright$.
\end{definition}

In the sequel, by abuse of language, we will refer to distances and angles without explicitely mentionning the code, to ease the reading.

\medskip
In Figure~\ref{fig:SC2}, assuming that $\circlearrowright$ denotes the clockwise direction, then:\newline
\noindent
$\omega(r_1)^\circlearrowright = (c,\beta)(ab,\alpha)(ab,\beta)$,
$\omega(r_2)^\circlearrowright = (ab,\alpha)(ab,\beta)(c,\beta)$, 
$\omega(r_3)^\circlearrowright = (ab,\beta)(c,\beta)(ab,\alpha)$,\newline
\noindent  
$\omega(r_1)^\circlearrowleft  = \omega(r_1)^\circlearrowright $,
$\omega(r_2)^\circlearrowleft  = \omega(r_3)^\circlearrowright$, and
$\omega(r_3)^\circlearrowleft  = \omega(r_2)^\circlearrowright$.

\begin{figure}[H]
\begin{center}
\includegraphics[width=0.35\linewidth]{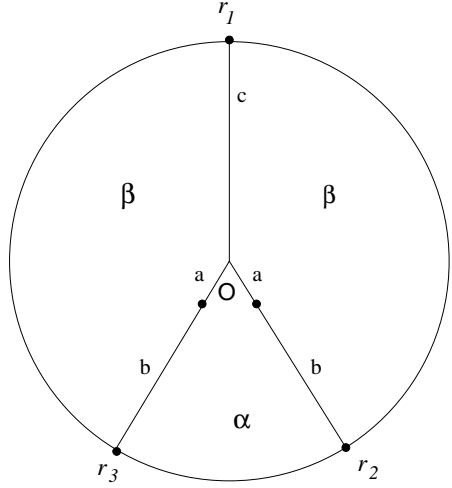}
\end{center}
\caption{Computation of Configuration Words --- the sensors are the black bullets.} 
\label{fig:SC2}
\end{figure}

\begin{definition}
A  configuration $\mathcal{P}=\{p_1, \ldots, p_n\}$ of sensors is said to be {\it periodic} if there exists a rotation $Rot$ of center $\tO$ and of angle $\alpha$, $0 < \alpha < 2\pi$ such that $Rot(\mathcal{P})=\mathcal{P}$, (i.e, the configuration $Rot(\mathcal{P})$ is superimposed on $\mathcal{P}$). Let ${\mathcal{C}}$ be a configuration with no robot at the position $\tO$. A construction of configuration words on ${\mathcal{C}}$ is said to be {\it period-free} when the configuration words corresponding to ${\mathcal{C}}$ are not periodic when ${\mathcal{C}}$ is not periodic. In other words, if the configuration words corresponding to a configuration  ${\mathcal{C}}$ are strictly periodic, then  ${\mathcal{C}}$ is periodic.
\end{definition}

In Figure~\ref{fig:proof}, configuration words are strictly periodic (we can distinct two configuration words, $w=((d,\alpha), (ab,\alpha))^2$ and $w'=((ab,\alpha)ad,\alpha))^2$) and the figure is periodic since the set of positions of the sensors is preserved by the rotation of center $\tO$ and of angle $\pi$.

Notice the following important fact, which is fondamental for the statement of the following results:

\begin{remark}\label{rmk:rotations}
The construction of configuration words stated in Definition~\ref{def:cw} is period-free.
\end{remark}

\begin{proof}
Let $\mathcal{P}=\{p_1, \ldots, p_n\}$ be a configuration of sensors.
If $w(r)$ is strictly periodic for some radius $r$, that is $w(r)=u^k$ for some integer $k>2$ and a word $u$, then there exists $k$ rotations $Rot_j$, $1\leq j \leq k$ of center $\tO$ and of angle $\frac{2\pi j}{k}$ such that $Rot_j(\mathcal{P})=\mathcal{P}$.
\end{proof}

In this section, we consider sensors with chirality.  In particular, they are able to agree on a common 
orientation $\circlearrowright$ of $SEC$.
Let $CW$ be the set of configuration words computed by the sensors over $\mR$
according to $\circlearrowright$.   
Since for every sensor, there is no ambiguity on $\circlearrowright$, in the sequel of this section, 
we omit $\circlearrowright$ in the notations when it is clear in the context.
For instance in Figure~\ref{fig:SC2}, 
$$
\begin{array}{ccccc}
  CW = \{ &
    (c,\beta)(ab,\alpha)(ab,\beta), & 
    (ab,\alpha)(ab,\beta)(c,\beta), &
    (ab,\beta)(c,\beta)(ab,\alpha) &
    \} \\
          &
    = \omega(r_1) &
    = \omega(r_2) &
    = \omega(r_3) 
\end{array}
$$

\begin{remark}
\label{rem:cw}
The following propositions are equivalent:
\begin{enumerate}
\item There exists one sensor on $\tO$
\item For every radius $r \in \mR$, $\omega(r)^\circlearrowright  = \omega(r)^\circlearrowleft = (0,0)$
\item $CW = \{(0,0)\}$
\end{enumerate}
\end{remark}

\begin{remark}
\label{rem:rot}
Given an orientation $\circlearrowright$, for every pair $u,v$ in $CW$, $v$ is a rotation of $u$.   
\end{remark}

The lexicographic order $\preceq$ on the set of radius words over $\mR$ is naturally built over the natural order $<$ on the set of codes of distances (we recall that is is compatible with the natural order on real numbers).

\begin{definition} \label{def:order}
Let $Alph(CW)$ be the set of letters appearing in $CW$.  
Let $(u,x)$ and $(v,y)$ be any two letters in $Alph(CW)$.  
Define the order $\lessdot$ over $Alph(CW)$ as follows:
$$ 
(u,x) \lessdot (v,y) \Longleftrightarrow \left\{
  \begin{array}{ll}
     u \precneqq v\\
     \mbox{or}\\
     u = v \mbox{ and } x < y
     \end{array}
\right.
$$
The lexicographic order $\preceq$ on $CW$ is built over $\lessdot$.
\end{definition}

\begin{lemma}
\label{lemma2}
Given an orientation $\circlearrowright$, if there exist two distinct radii $r_1$ and $r_2$ in $\mR$ such that 
both $\omega(r_1)$ and $\omega(r_2)$ are Lyndon words ($\omega(r_1),\omega(r_2) \in CW$), then $CW = \{(0,0)\}$. 
\end{lemma}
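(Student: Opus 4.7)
The plan is to argue by contradiction: assume that both $\omega(r_1)$ and $\omega(r_2)$ are Lyndon words, that $r_1 \neq r_2$, and that $CW \neq \{(0,0)\}$. First I would use Remark~\ref{rem:cw} to exclude the degenerate case: if $CW \neq \{(0,0)\}$, then no sensor sits at $\tO$, so by Definition~\ref{def:cw} neither $\omega(r_1)$ nor $\omega(r_2)$ equals the singleton $(0,0)$, and both are genuine words of length $\ell = \sharp\mR \geq 2$ whose letters are pairs $(\rho,\alpha)$.

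Next, I would exploit Remark~\ref{rem:rot}: since $\omega(r_1),\omega(r_2) \in CW$, the word $\omega(r_2)$ is a rotation of $\omega(r_1)$. Because the configuration word reads the radii of $\mR$ in the fixed cyclic order induced by $\circlearrowright$, writing $r_2 = Succ_i(r_1,\circlearrowright)$ with $1 \leq i \leq \ell-1$, the construction in Definition~\ref{def:cw} gives $\omega(r_2) = Rot_{i+1}(\omega(r_1))$. The assumption $r_1 \neq r_2$ forces the shift index $j := i+1$ to satisfy $2 \leq j \leq \ell$, so this is a non-trivial rotation.

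Finally, I would apply Definition~\ref{def:LW} to both Lyndon words. The Lyndon property of $\omega(r_1)$ yields $\omega(r_1) \prec Rot_j(\omega(r_1)) = \omega(r_2)$. Symmetrically, $\omega(r_1)$ is itself a non-trivial rotation of $\omega(r_2)$, so the Lyndon property of $\omega(r_2)$ yields $\omega(r_2) \prec \omega(r_1)$. These two strict inequalities cannot hold simultaneously, giving the desired contradiction. The real content of the lemma is the classical fact that inside the conjugacy class of a primitive word there is a unique lexicographically minimal element, its Lyndon representative; the only care needed to transfer this to our geometric setting is the upfront exclusion of the degenerate $(0,0)$ case and the observation that $r_1 \neq r_2$ is precisely what guarantees that the rotation relating $\omega(r_1)$ to $\omega(r_2)$ is not the identity.
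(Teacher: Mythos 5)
Your proposal is correct and follows essentially the same route as the paper's proof: exclude the sensor-at-$\tO$ case via Remark~\ref{rem:cw}, note via Remark~\ref{rem:rot} that each configuration word is a (non-trivial, since $r_1\neq r_2$) rotation of the other, and derive the contradictory pair of strict inequalities $\omega(r_1)\prec\omega(r_2)$ and $\omega(r_2)\prec\omega(r_1)$ from the Lyndon property. Your version is in fact slightly more careful than the paper's, which attributes the inequalities to Definition~\ref{def:cw} rather than to the minimality clause of Definition~\ref{def:LW} and does not spell out why the rotation is non-trivial.
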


\noindent
\begin{proof}
Assume by contradiction that, given an orientation $\circlearrowright$, two distinct radii $r_1$ and $r_2$ exist 
such that both $\omega(r_1)$ and $\omega(r_2)$ are Lyndon words and $CW \neq \{(0,0)\}$.  
By Remark~\ref{rem:cw}, there exists no sensor located at $\tO$.
By Remark~\ref{rem:rot}, $\omega(r_1)$ (respectively, $\omega(r_2)$) is a rotation of $\omega(r_2)$ (resp.
$\omega(r_1)$).  So, by Definition~\ref{def:cw}, $\omega(r_1) \prec \omega(r_2)$ and $\omega(r_2) \prec \omega(r_1)$. A contradiction.
\end{proof}

\begin{corollary}
\label{cor:lem2}
Given an orientation $\circlearrowright$, if there exists (at least) one radius $r$ such that 
$w(r)$ is a Lyndon word strictly greater than $(0,0)$, then $r$ is unique.
\end{corollary}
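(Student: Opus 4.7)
The plan is to derive this as a direct contrapositive-style consequence of Lemma~\ref{lemma2}. Assume for contradiction that the radius $r$ of the hypothesis is not unique, so there exists a second radius $r' \neq r$ in $\mR$ such that $\omega(r')$ is also a Lyndon word. Both $\omega(r)$ and $\omega(r')$ lie in $CW$ (by Definition~\ref{def:cw}), so the premises of Lemma~\ref{lemma2} are met, forcing $CW = \{(0,0)\}$.

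Then I would close the argument by noting that $\omega(r) \in CW$ but $\omega(r)$ is strictly greater than $(0,0)$ by assumption, contradicting $CW = \{(0,0)\}$. Hence no such second radius $r'$ can exist, and $r$ is unique. There is essentially no obstacle here: all the work has been done in Lemma~\ref{lemma2}, and the corollary is merely a restatement in contrapositive form, using Remark~\ref{rem:cw} implicitly to identify $CW = \{(0,0)\}$ with the degenerate case excluded by $w(r) > (0,0)$.
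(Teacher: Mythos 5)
Your proposal is correct and is exactly the argument the paper intends: the corollary is stated as an immediate consequence of Lemma~\ref{lemma2} (the paper gives no separate proof), and your contradiction via $CW=\{(0,0)\}$ versus $\omega(r)\succ(0,0)$ is the standard way to spell it out.
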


\subsection{Chirality, Leader Election and Lyndon Words}
\label{sub:CHI&LE&LW}

In this section, we show that the problems of the existence of a unique Lyndon word and of the existence of a deterministic Leader Election protocol are equivalent.

\begin{lemma}
\label{lemma3}
If there exists $r \in \mR$ such that $w(r)$ is a Lyndon word, then the $n$ sensors are able to deterministically 
agree on a same sensor $L$. 
\end{lemma}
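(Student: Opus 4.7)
The plan is to use the uniqueness statement from Corollary~\ref{cor:lem2} to single out a radius, and then to single out a sensor on that radius using the natural ordering of positions by distance to $\tO$. The key observation is that chirality already gives the sensors a common orientation $\circlearrowright$ of $SEC$, so every sensor computes the same set $CW$ of configuration words and the same lexicographic order on it.

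First I would split on whether $w(r) = (0,0)$ or not. If $w(r) = (0,0)$, Remark~\ref{rem:cw} gives $CW = \{(0,0)\}$, which is the case where one sensor occupies $\tO$; since positions are pairwise distinct, that sensor is unique and can be elected as leader. Otherwise, $w(r)$ is a Lyndon word strictly greater than $(0,0)$, and Corollary~\ref{cor:lem2} ensures that $r$ is the unique radius in $\mR$ with this property. Every sensor, knowing the positions of all $n$ sensors and the common orientation $\circlearrowright$, can build $SEC$, enumerate $\mR$, compute every $\omega(r')$ for $r' \in \mR$, and then identify this unique $r$ deterministically by checking the Lyndon condition in Definition~\ref{def:LW}.

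Once $r$ is identified, I would elect as leader the sensor on $r$ that is closest to $\tO$ (ties cannot arise, since two sensors on the same radius at the same distance from $\tO$ would coincide, which is forbidden). Because the distance from $\tO$ to a sensor is invariant under each sensor's choice of unit (it is measured proportionally to $\sigma$, as already exploited in Definition~\ref{def:radword}), every sensor agrees on which sensor of $r$ is the closest to $\tO$, and the election is consistent.

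There is no serious obstacle: all the combinatorial work has been done in Lemma~\ref{lemma2} and Corollary~\ref{cor:lem2}, and chirality guarantees a common orientation of $SEC$ so that the same $CW$, the same order $\preceq$, and hence the same distinguished radius $r$ are computed by every sensor. The only point that requires a brief justification is that the ``closest sensor to $\tO$ on $r$'' rule is well-defined and coordinate-system independent, which follows from the distinctness of positions and from the normalization by $\sigma$ used throughout Section~\ref{sub:confs&words}.
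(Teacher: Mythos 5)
Your proof is correct and follows essentially the same route as the paper: a case split on whether a sensor occupies $\tO$ (equivalently $w(r)=(0,0)$), then using Corollary~\ref{cor:lem2} to identify the unique radius carrying the Lyndon word and electing the sensor on it nearest to $\tO$. The extra justifications you give (well-definedness of the nearest-sensor rule and agreement on $CW$ under chirality) are sound but simply make explicit what the paper's shorter proof leaves implicit.
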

\begin{proof}
Directly follows from Corollary~\ref{cor:lem2}: If there is a sensor $s$ located on $\tO$, then 
the $n$ sensors are able to agree on $L=s$.  Otherwise, there exists a single radius $r \in \mR$ such 
that $\omega(r)$ is a Lyndon word.  In that case, all the sensors are able to agree on the sensor on $r$ that 
is the nearest one from $\tO$. 
\end{proof}

\begin{lemma}
\label{lemma4}
If there exists no radius $r \in \mR$ such that $\omega(r)$ is a Lyndon word, then 
there exists no deterministic algorithm allowing the $n$ sensors to 
agree on a same sensor $L$. 
\end{lemma}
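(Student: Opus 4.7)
The plan is to prove the contrapositive in two steps: first extract a nontrivial rotational symmetry of the configuration from the absence of any Lyndon word in $CW$, then combine that symmetry with anonymity to preclude any deterministic leader election.

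For the first step I would dismiss the degenerate case $CW = \{(0,0)\}$ at the outset: by Remark~\ref{rem:cw} it would place a sensor at $\tO$, but then $(0,0)$, being a one-letter word, is primitive and minimal, hence a Lyndon word, contradicting the hypothesis. So no sensor lies at $\tO$ and every $\omega(r)$ has length $\sharp\mR$. By Remark~\ref{rem:rot} the elements of $CW$ are exactly the rotations of a single word $\omega(r_1)$. If $\omega(r_1)$ were primitive, its $\sharp\mR$ rotations would be pairwise distinct and the lexicographically smallest among them would be a Lyndon word — the standard fact that the minimal rotation of any primitive word is Lyndon — lying inside $CW$ and contradicting the assumption again. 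Hence $\omega(r_1) = u^p$ for some word $u$ and integer $p \geq 2$, and Remark~\ref{rmk:rotations} guarantees that the rotation $R$ of center $\tO$ and angle $\frac{2\pi}{p}$ satisfies $R(\mP) = \mP$.

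For the second step I would suppose, for contradiction, that a deterministic algorithm $\mathcal{A}$ correctly elects a leader. Pick any sensor $s$ at position $p \neq \tO$ and set $s' = R(s)$, at position $R(p) \neq p$. Since $R$ is a rotation it preserves handedness, so one may legitimately choose the local frame of $s'$ to be the $R$-image of the local frame of $s$ without violating the common chirality. With this alignment and the identity $R(\mP) = \mP$, the sensors $s$ and $s'$ receive exactly the same list of local coordinates of the sensors in $\mP$. Anonymity and determinism force $\mathcal{A}$ to produce the same local output at both sensors; translating that output back into the global plane, if $s$ elects the sensor at position $q$ then $s'$ elects the sensor at $R(q)$. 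Agreement requires $q = R(q)$, which forces $q = \tO$ and contradicts the absence of any sensor at the center.

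The main obstacle I expect is the formal treatment of the second step: specifying what a sensor's ``view'' is and verifying that $s$ and $s'$ receive identical inputs once their frames are aligned via $R$. The chirality hypothesis enters here in an essential way, since $R$ is orientation-preserving and hence compatible with the common handedness — were $R$ orientation-reversing, the adversarial alignment would violate chirality and the argument would collapse, which is exactly why the chirality-free analogue is deferred to Section~\ref{sec:nochi}. The first step, in contrast, is essentially combinatorial and rests only on the paper's period-free construction together with standard Lyndon-word facts.
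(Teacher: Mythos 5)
Your proposal is correct and follows essentially the same route as the paper: rule out a sensor at $\tO$ (since $(0,0)$ would be a Lyndon word), use minimality/primitivity of the rotations in $CW$ together with the period-free construction (Remark~\ref{rmk:rotations}) to extract a nontrivial rotational symmetry of the configuration, and conclude impossibility from that symmetry. Your second step simply spells out, via local frames aligned by the rotation and the anonymity/determinism of the algorithm, the indistinguishability argument that the paper compresses into the sentence that an algorithm electing a leader on $r$ would also distinguish one on $r'$.
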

\begin{proof}
Assume by contradiction that no radius $r \in \mR$ exists such that $\omega(r)$ is a Lyndon word and that
there exists an algorithm $A$ allowing the $n$ sensors to deterministically agree on a same sensor $L$. From
Remark~\ref{rem:cw}, there is no sensor located on $\tO$, otherwise for all $r \in \mR$, $w(r)=(0,0)$ would
be a Lyndon word.

Let $min_\omega$ be a word in $CW$ such that $\forall r \in \mR$, $min_\omega \preceq \omega(r)$. That is,
$min_\omega$ is minimal. 
Assume first that $min_\omega$ is primitive.  Then, $min_\omega$ is a Lyndon word that contradicts the assumption. 
So, $min_\omega$ is a strictly periodic word (there exist $u$ and $k>1$ such that $min_\omega=u^k$) and, 
from Lemma~\ref{lemma:MOT}, we deduce that for all $r \in \mR$, 
$w(r)$ is also strictly periodic. Since by Remark~\ref{rmk:rotations} our construction is period-free, 
this implies that the configuration is periodic. Thus, for every $r \in \mR$, there exists at least one 
radius $r' \in \mR$ such that $r \neq r'$ and $\omega(r) = \omega (r')$. So if an algorithm elects a leader on
$r$ in a deterministical way, this algorithm distinguishes another leader on $r'$. 
In that case, $A$ cannot allow the $n$ sensors to deterministically agree on a same sensor $L$. 
\end{proof}

Figure~\ref{fig:proof} presents an example of a configuration where the sensors have the same measure unit and
their $y$ axis meets the radius on which they are located. In this case, no radius $r \in \mR$ exists such
that $\omega(r)$ is a Lyndon word and, with respect to Lemma~\ref{lemma4}, sensors are not able to
deterministically agree on a same sensor $L$. 

\begin{figure}[H]
\begin{center}
\includegraphics[width=0.4\linewidth]{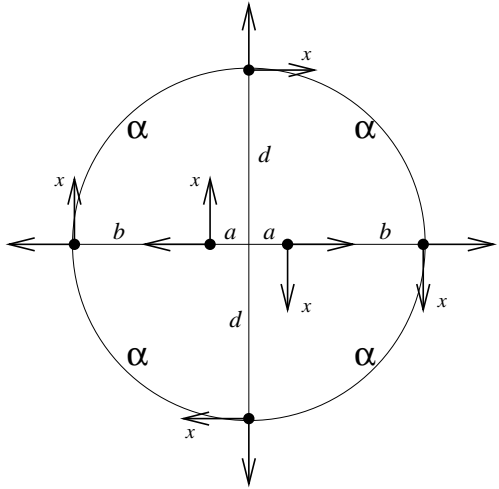}
\end{center}
\caption{An example of configuration as in Lemma~\ref{lemma4}.}
\label{fig:proof}
\end{figure}

\medskip

The following theorem follows from Lemmas~\ref{lemma3} and~\ref{lemma4}:

\begin{theorem}
\label{theorem1}
Given a configuration ${\mathcal{C}}$ of any number $n > 1$ of disoriented sensors with chirality scattered on the plane, 
the $n$ sensors are able to deterministically agree on a same sensor $L$ if and only 
if there exists a radius $r \in \mR$ such that $\omega(r)$ is a Lyndon Word.
\end{theorem}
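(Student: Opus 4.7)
The plan is to derive Theorem~\ref{theorem1} as an immediate consequence of Lemmas~\ref{lemma3} and~\ref{lemma4}, which respectively supply the sufficiency and the necessity of the Lyndon word condition. The proof itself will be a short synthesis; the substantive work is already encapsulated in those two lemmas.

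For the sufficient (``if'') direction, I would apply Lemma~\ref{lemma3}. If some radius $r \in \mR$ carries a Lyndon word $\omega(r)$, the argument splits into two cases. Either a sensor occupies $\tO$, in which case Remark~\ref{rem:cw} makes it the canonical leader. Otherwise, Corollary~\ref{cor:lem2} guarantees that $r$ is the unique radius whose configuration word is a Lyndon word strictly greater than $(0,0)$; chirality ensures that every sensor computes the same set $CW$ with respect to the common orientation $\circlearrowright$, so every sensor identifies $r$ identically and can then designate as leader the sensor on $r$ closest to $\tO$.

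For the necessary (``only if'') direction, I would appeal to Lemma~\ref{lemma4} in contrapositive form. Suppose no $r \in \mR$ yields a Lyndon word $\omega(r)$. Then the lexicographically minimal word $min_\omega \in CW$ cannot be primitive, for otherwise it would itself be a Lyndon word, contradicting the assumption. Hence $min_\omega$ is strictly periodic; by Lemma~\ref{lemma:MOT}, every rotation of $min_\omega$ is strictly periodic, and by Remark~\ref{rem:rot} every element of $CW$ is strictly periodic. The period-free property of the word construction (Remark~\ref{rmk:rotations}) then forces geometric periodicity of $\mC$ itself, so for every $r \in \mR$ there exists a distinct $r' \in \mR$ with $\omega(r) = \omega(r')$. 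Any deterministic procedure electing a sensor from $r$ would, by indistinguishability, elect the matching sensor on $r'$, contradicting the uniqueness required by Leader Election.

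The only delicate point in this synthesis is the passage from strict periodicity of $min_\omega$ to genuine rotational symmetry of the configuration, which is exactly what Remark~\ref{rmk:rotations} licenses: without the period-free guarantee, word-level periodicity need not reflect geometric periodicity, and the necessity direction would collapse. Since both implications have already been carried out in the preceding lemmas, the theorem follows by combining them.
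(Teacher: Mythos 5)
Your proposal is correct and mirrors the paper exactly: the paper also obtains Theorem~\ref{theorem1} as an immediate combination of Lemma~\ref{lemma3} (sufficiency) and Lemma~\ref{lemma4} (necessity), whose internal arguments you have faithfully restated. No gap to report.
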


The acute reader will have noticed that when $n=1$ electing a leader
is straightforward. On the other hand, when $n=2$, it is impossible to
elect a unique leader in a deterministic way: Our theorem confirms
this fact as the two words obtained for the two radii are the same in such a configuration (actually each of both words consists in the concatenation of two identical letters).

\section{Leader Election without Chirality}
\label{sec:nochi}
 
In this section, the lack of chirality among the sensors is considered.
A first result presented in \cite{DieudonneP07} shows that the Configuration Word construction described in Section~\ref{sec:chi} is not enough to deal with an even number of sensors without chirality. We now show how to extend the Configuration Word so that we obtain a general characterization.

In Subsection~\ref{sub:words&nochi}, general definitions and results are given.  
They are then used to show in Subsection~\ref{sub:NOCHI&LE&LW} the relationship between Leader Election and
Lyndon words.

\subsection{Words and Lack of Chirality}
\label{sub:words&nochi}

Devoid of chirality, the sensors are no longer able to agree on a common orientation $\circlearrowright$ of $SEC$.   
In other words, with respect to their handedness, some of the $n$ sensors may assign $\circlearrowright$
to the actual clockwise direction, whereas some other may assign
$\circlearrowright$ to the counterclockwise direction.  
As a consequence, given a radius $r \in \mR$ and two sensors $s$ and $s'$ on $r$, the configuration word
$\omega(r)^\circlearrowright$ computed by $s$ may differ from the one computed by $s'$, depending on whether
$s$ and $s'$ share the same handedness or not.  For instance, in Figure~\ref{fig:SC2}, 
the configuration word $\omega(r_2)^\circlearrowright$ (as defined by Definition~\ref{def:cw}) is equal to either 
$(ab,\alpha)(ab,\beta)(c,\beta)$ or $(ab,\beta)(c,\beta)(ab,\alpha)$, depending on the fact
$\circlearrowright$ denotes the clockwise direction or the counterclockwise, respectively. 

Furthermore, given a radius $r_i\in \mR$, if there exists a radius $r_j\in \mR$ such that 
$\omega(r_i)^\circlearrowleft = \omega(r_j)^\circlearrowright$, then there exists an axial 
symmetry which corresponds to the bissectrix of $\sphericalangle(r_i \tO r_j)$.
Otherwise (\ie no radius $r_j$ exists such that $\omega(r_i)^\circlearrowleft = \omega(r_j)^\circlearrowright$), 
no such axial symmetry exists.  
This observation leads us to provide the following definition in order to classify the radii.

\begin{definition}[Type of symmetry]
\label{def:type}
A radius $r_i\in \mR$ is of type (of symmetry)~$1$ if there exists $j \neq i$ such that $\omega(r_i)^\circlearrowleft = \omega(r_j)^\circlearrowright$.
Otherwise (when  $\omega(r_i)^\circlearrowleft = \omega(r_j)^\circlearrowright \Rightarrow i=j$, that is there is no symmetry axis or the axis corresponds to the radius itself), $r_i$ is said to be of type~$0$.
A radius of type~$t$ is said to be $t$-symmetric.   
\end{definition}

We use the type of symmetry to define the strong configuration word of any radius in $\mR$ as follows:

\begin{definition}[Strong Configuration Word]
\label{def:CW}
Let $r$ be a radius in $\mR$.  
The strong configuration word of $r$ with respect to $\circlearrowright$, an arbitrary orientation of $SEC$, 
denoted by $W(r)^\circlearrowright$, is defined by:
$$
W(r)^\circlearrowright \stackrel{\mathrm{def}}{=}
\left\{
\begin{array}{ll}
(0,0,0) & \mbox{if } \rho_r = 0\\
(s_0,\rho_0,\alpha_0)(s_1,\rho_1,\alpha_1)\ldots(s_k,\rho_k,\alpha_k) \ \mbox{ with } k=\sharp\mR -1 &\mbox{otherwise}\\
\end{array}
\right.
$$
such that, $s_0$ is the type of $r$, $\rho_0$ is the radius word of $r$, $\alpha_0$ is the angle between $r$ and its successor in $\mR$ 
following $\circlearrowright$, and for every $i \in [1,k]$, 
$r_i$ being the $i^{th}$ successor of $r$ in $\mR$ following $\circlearrowright$,  
$s_i$ is the type of symmetry of $r_i$, 
$\rho_i$ is the radius word of $r_i$, and $\alpha_i$ is the angle between $r_i$ and its successor 
in $\mR$ following $\circlearrowright$.
\end{definition}

Note that the (weak) configuration words used in Section~\ref{sec:chi} (Definition~\ref{def:cw}) could have
been defined as strong configuration words by using Constant $0$ as type of symmetry, \ie by mapping each
letter $(\rho,\alpha)$ to $(0,\rho,\alpha)$.  We discuss this issue in more details in the next
section.

Notice also that the construction of strong configuration words has obviously the same important property as the construction of weak configuration words:

\begin{remark}\label{rmk:rotations_nochi}
The construction of strong configuration words stated in Definition~\ref{def:CW} is period-free.
\end{remark}

\begin{definition}[Type of word]
A word $W(r)$ is said to be of type $t$ if and only if its first letter is equal to $(t,x,y)$ for any $x,y$.
\end{definition}

Devoid of chirality, the sensors cannot agree on a common orientation of $SEC$.  So, for each radius $r \in \mR$, 
each sensor computes two strong configuration words, one for each direction, 
$W(r)^\circlearrowright$ and $W(r)^\circlearrowleft$. 
Let $CW$ be the set of strong configuration words computed in both directions by the sensors for each radius in $\mR$.  
In Figure~\ref{fig:SC2}, 
all the sensors compute the following set: 
$$
\begin{array}{ccccc}
  CW = \{ &
    (0,c,\beta)(1,ab,\alpha)(1,ab,\beta), & 
    (1,ab,\alpha)(1,ab,\beta)(0,c,\beta), &
    (1,ab,\beta)(0,c,\beta)(1,ab,\alpha) &
    \} \\
          &
    = W(r_1)^\circlearrowright = W(r_1)^\circlearrowleft &
    = W(r_2)^\circlearrowright = W(r_3)^\circlearrowleft &
    = W(r_3)^\circlearrowright = W(r_2)^\circlearrowleft 
\end{array}
$$

\begin{remark}
\label{rem:CW}
The following propositions are equivalent:
\begin{enumerate}
\item There exists one sensor on $\tO$
\item For every radius $r \in \mR$, $W(r)^\circlearrowright  = W(r)^\circlearrowleft = (0,0,0)$
\item $CW = \{(0,0,0)\}$
\end{enumerate}
\end{remark}


Similarly to Section~\ref{sec:chi}, we build the lexicographic order $\preceq$ on the set of radius words over
$\mR$ as follows:

\begin{definition} \label{def:order2}
Let $Alph(CW)$ be the set of letters appearing in $CW$.  
Let $(b,u,x)$ and $(c,v,y)$ be any two letters in $Alph(CW)$.  
Define the order $\lessdot$ over $Alph(CW)$ as follows:
$$ 
(b,u,x) \lessdot (c,v,y) \Longleftrightarrow \left\{
  \begin{array}{ll}
     b < c\\
     \mbox{or}\\
     b=c \mbox{ and } u \precneqq v\\
     \mbox{or}\\
     b=c \mbox{ and } u = v \mbox{ and } x < y
     \end{array}
\right.
$$
The lexicographic order $\preceq$ on $CW$ is built over $\lessdot$.  
\end{definition}


The proof of the following lemma is similar to the one of Lemma~\ref{lemma2}:

\begin{lemma}
\label{lemma2-bis}
Given an orientation $\circlearrowright$, if there exist two distinct radii $r_1$ and $r_2$ in $\mR$ such that 
both $W(r_1)$ and $W(r_2)$ are Lyndon words ($W(r_1),W(r_2) \in CW$), then $CW = \{(0,0,0)\}$. 
\end{lemma}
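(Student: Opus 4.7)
The plan is to mimic the proof of Lemma~\ref{lemma2} step by step, adapting it to the strong configuration words. I proceed by contradiction: assume that, for some fixed orientation $\circlearrowright$, there are two distinct radii $r_1, r_2 \in \mR$ such that both $W(r_1)^\circlearrowright$ and $W(r_2)^\circlearrowright$ are Lyndon words and, simultaneously, $CW \neq \{(0,0,0)\}$.

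First, I would use the counterpart of Remark~\ref{rem:cw}, namely Remark~\ref{rem:CW}, to eliminate the degenerate case: since $CW \neq \{(0,0,0)\}$, no sensor is located at the center $\tO$, and therefore neither $W(r_1)^\circlearrowright$ nor $W(r_2)^\circlearrowright$ equals $(0,0,0)$. In particular both words have length $\sharp\mR$ and are built over $Alph(CW)$ with the ordering $\lessdot$ from Definition~\ref{def:order2}.

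Next, the key observation is that, for the fixed orientation $\circlearrowright$, the strong configuration words $W(r_1)^\circlearrowright$ and $W(r_2)^\circlearrowright$ enumerate the very same cyclic sequence of triples $(s_i,\rho_i,\alpha_i)$ obtained by walking around $SEC$ through the radii of $\mR$; they differ only by the choice of starting radius. This is the exact analog, for strong words, of Remark~\ref{rem:rot}, and it is an immediate consequence of Definition~\ref{def:CW}: the type $s_i$, the radius word $\rho_i$ and the angle $\alpha_i$ attached to a radius $r_i$ depend only on $r_i$ itself and on $\circlearrowright$, not on the choice of the starting radius. Hence $W(r_2)^\circlearrowright = Rot_j\bigl(W(r_1)^\circlearrowright\bigr)$ for some $j \in \{2,\ldots,\sharp\mR\}$, and symmetrically $W(r_1)^\circlearrowright$ is a nontrivial rotation of $W(r_2)^\circlearrowright$.

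Finally, I apply the Lyndon word property (Definition~\ref{def:LW}) twice: since $W(r_1)^\circlearrowright$ is Lyndon and $W(r_2)^\circlearrowright$ is one of its nontrivial rotations, we get $W(r_1)^\circlearrowright \prec W(r_2)^\circlearrowright$; by the symmetric argument, $W(r_2)^\circlearrowright \prec W(r_1)^\circlearrowright$. This is a contradiction, which rules out the assumption and yields $CW = \{(0,0,0)\}$. I expect no genuine obstacle here: the only point to be careful about is justifying the rotation claim for the enriched letters $(s_i,\rho_i,\alpha_i)$, which is immediate because $s_i$ is defined intrinsically on radii (Definition~\ref{def:type}) and is therefore preserved when one merely shifts the starting radius along $\circlearrowright$.
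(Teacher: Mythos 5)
Your proof is correct and follows exactly the route the paper intends: the paper only states that the proof is ``similar to the one of Lemma~\ref{lemma2}'', and your argument is precisely that adaptation (eliminate the center case via Remark~\ref{rem:CW}, note that for a fixed orientation the strong configuration words of distinct radii are nontrivial rotations of one another since the types $s_i$ are intrinsic to the radii, then derive the contradiction $W(r_1)\prec W(r_2)\prec W(r_1)$ from Lyndon minimality). Nothing further is needed.
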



\begin{corollary}
\label{cor:lem2-bis}
Given an orientation $\circlearrowright$, if there exists (at least) one radius $r$ such that 
$W(r)$ is a Lyndon word strictly greater than $(0,0,0)$, then $r$ is unique.
\end{corollary}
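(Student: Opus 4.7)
The plan is to derive Corollary~\ref{cor:lem2-bis} as an immediate consequence of Lemma~\ref{lemma2-bis}, exactly paralleling how Corollary~\ref{cor:lem2} was obtained from Lemma~\ref{lemma2} in the chirality case. First I would assume the hypothesis: there is a radius $r \in \mR$ such that $W(r)$ is a Lyndon word with $W(r) \succ (0,0,0)$. For contradiction, I would suppose there is a second radius $r' \in \mR$, $r' \neq r$, for which $W(r')$ is also a Lyndon word. Since the orientation $\circlearrowright$ is fixed and $W(r), W(r') \in CW$, the hypothesis of Lemma~\ref{lemma2-bis} is satisfied, and we conclude $CW = \{(0,0,0)\}$.

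Next, I would combine this conclusion with the assumption $W(r) \succ (0,0,0)$. Since $W(r) \in CW$ and $CW$ is reduced to the single element $(0,0,0)$, we must have $W(r) = (0,0,0)$, contradicting $W(r) \succ (0,0,0)$. Hence no such $r'$ exists and $r$ is unique.

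The argument is essentially a one-line deduction, so there is no genuine obstacle; the only subtlety worth flagging is the role of the hypothesis ``strictly greater than $(0,0,0)$'': without it, the degenerate case in which a sensor sits at $\tO$ (so that every $W(r) = (0,0,0)$ is trivially a Lyndon word by Remark~\ref{rem:CW}) would break uniqueness. The strict inequality excludes precisely this case, and the corollary then follows immediately.
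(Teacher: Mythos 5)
Your proposal is correct and matches the paper's intent: the paper gives no explicit proof of Corollary~\ref{cor:lem2-bis}, presenting it as an immediate consequence of Lemma~\ref{lemma2-bis} (mirroring Corollary~\ref{cor:lem2} from Lemma~\ref{lemma2}), which is exactly the one-line contradiction you give. Your remark on why the strict inequality $W(r) \succ (0,0,0)$ is needed to exclude the degenerate case of a sensor at $\tO$ is also accurate.
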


\subsection{Leader Election and Lyndon Words}
\label{sub:NOCHI&LE&LW}

Let $\mR_L$ be the subset of radii $r \in \mR$ such that $W(r)$ is a Lyndon word in the clockwise or in 
the counterclockwise orientation. Denote by $\sharp \mR_L$ the number of radii in $\mR_L$. 

The following result directly follows from Lemma~\ref{lemma2-bis} and Corollary~\ref{cor:lem2-bis}:
\begin{lemma}
\label{lemma8}
If for some orientation $\circ$ of $SEC$ in $\{\circlearrowright,\circlearrowleft\}$, there exists $r \in \mR$
such that  $W(r)^\circ=(0,0,0)$, then for all $r \in \mR$,  $W(r)^\circ=(0,0,0)$ and the leader is the sensor
at the center of $SEC$.
\end{lemma}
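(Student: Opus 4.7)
The plan is to reduce Lemma~\ref{lemma8} to an application of Remark~\ref{rem:CW}, which already packages the equivalence between ``a sensor sits at $\tO$'' and ``all strong configuration words collapse to $(0,0,0)$''. So the only real content is to trace the hypothesis back to the geometric condition of Remark~\ref{rem:CW}.

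First I would unfold the hypothesis: assume there is an orientation $\circ \in \{\circlearrowright,\circlearrowleft\}$ and a radius $r \in \mR$ with $W(r)^\circ = (0,0,0)$. By Definition~\ref{def:CW}, the strong configuration word of a radius equals $(0,0,0)$ exactly in the branch where $\rho_r = 0$. Appealing to Definition~\ref{def:radword}, $\rho_r = 0$ occurs precisely when a sensor is located at $\tO$. Hence our hypothesis forces the existence of a sensor at $\tO$.

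From that geometric fact, the rest is mechanical. Since no two sensors share a position, the sensor at $\tO$ is unique, so it can be deterministically elected as $L$. Moreover, Remark~\ref{rem:CW} (implication $1 \Rightarrow 2$) tells us that whenever there is a sensor at $\tO$, \emph{every} radius $r' \in \mR$ satisfies $W(r')^\circlearrowright = W(r')^\circlearrowleft = (0,0,0)$, regardless of the orientation considered. This gives both conclusions of the lemma.

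There is essentially no obstacle here: the statement is a translation of Remark~\ref{rem:CW} phrased from the point of view of a single radius, and the uniqueness of the elected leader is immediate from the fact that configurations are sets of pairwise distinct positions. The only care to take is to note that the hypothesis is about \emph{one} orientation $\circ$, but the conclusion of Remark~\ref{rem:CW} gives a stronger statement valid for both orientations and all radii, which is exactly what is needed.
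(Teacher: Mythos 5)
Your proof is correct. The paper gives no explicit proof of this lemma: it simply asserts that the statement ``directly follows from Lemma~\ref{lemma2-bis} and Corollary~\ref{cor:lem2-bis}''. Your route is the more transparent one and, arguably, the more accurate one: you unfold $W(r)^\circ=(0,0,0)$ through Definition~\ref{def:CW} and Definition~\ref{def:radword} to obtain a sensor at $\tO$, and then invoke the equivalence of Remark~\ref{rem:CW} (implication $1\Rightarrow 2$) to propagate $(0,0,0)$ to every radius and both orientations, the unique sensor at $\tO$ being the elected leader. This buys you a justification that does not depend on Lyndon-word arguments at all, whereas the paper's citation of Lemma~\ref{lemma2-bis} (which concerns two distinct radii carrying Lyndon words) is at best indirect for this statement; the underlying content --- ``sensor at the center iff all strong configuration words collapse'' --- is the same in both. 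The only point worth making explicit, if you want to be fully rigorous, is that the non-degenerate branch of Definition~\ref{def:CW} can never yield the literal word $(0,0,0)$: in that branch $\rho_0$ is a nonempty word of codes of strictly positive distances (no sensor at $\tO$), so $\rho_0\neq 0$; this matches the paper's conventions and does not affect your argument.
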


Note that Lemma~\ref{lemma8} applies necessarily when $\sharp \mR_L > 2$. Indeed, this means that there exists at least two distinct radii $r_1$ and $r_2$ such that $w(r_1)$ and $w(r_2)$ are Lyndon words for a common orientation $\circ \in \{\circlearrowright,\circlearrowleft\}$ and by Lemma~\ref{lemma2}, this implies that $CW^\circ=\{(0,0)\}$, so that $CW=\{(0,0)\}$. So, in the remainder of this section, we study the case $\sharp \mR_L \leq 2$.

\begin{lemma}
If for all $r \in \mR$, $W(r)^\circlearrowright \neq (0,0,0)$,  $W(r)^\circlearrowleft \neq (0,0,0)$, and
$R_L=\{r_\ell\}$, the $n$ sensors are able to deterministically agree on a same sensor $L$.
\end{lemma}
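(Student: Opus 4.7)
The plan is to mimic the argument of Lemma~\ref{lemma3} (the chirality case), but to extract the common reference point out of the orientation-independent set $CW$ rather than out of a single orientation.

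First, I would observe that the hypothesis $W(r)^\circlearrowright \neq (0,0,0)$ and $W(r)^\circlearrowleft \neq (0,0,0)$ for every $r \in \mR$, together with Remark~\ref{rem:CW}, guarantees that no sensor sits on $\tO$; so a leader is to be found somewhere on one of the radii of $\mR$. Next, every sensor computes exactly the same set $CW$ of strong configuration words: although two sensors with opposite handedness will swap the labels $\circlearrowright$ and $\circlearrowleft$, the pair $\{W(r)^\circlearrowright, W(r)^\circlearrowleft\}$ attached to a given radius $r$ is a geometric invariant, and so is the union over all $r \in \mR$. Consequently every sensor agrees on which words of $CW$ are Lyndon.

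Now the assumption $\mR_L = \{r_\ell\}$ means that exactly one radius carries a Lyndon strong configuration word (in one or both orientations); by Corollary~\ref{cor:lem2-bis} applied orientation-by-orientation and by the uniqueness of $r_\ell$ in $\mR_L$, this Lyndon word is strictly greater than $(0,0,0)$ and is the unique Lyndon element of $CW$. Each sensor can therefore locally select this unique Lyndon word $W^\star$ in $CW$ without any ambiguity. Since every sensor knows, for each radius $r$ it sees, both $W(r)^\circlearrowright$ and $W(r)^\circlearrowleft$, the sensor identifies $r_\ell$ as the unique radius of $\mR$ one of whose two strong configuration words equals $W^\star$. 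Thus all $n$ sensors agree on the geometric radius $r_\ell$.

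It then remains to pick out a single sensor on $r_\ell$. I would elect the sensor lying on $r_\ell$ closest to $\tO$ (breaking the impossible tie would require two sensors at the same point, which is excluded by the model). This quantity is purely geometric and orientation-free, so every sensor computes the same $L$. The only delicate point, and the one deserving the most care, is the step that turns ``$W^\star$ is a Lyndon word of $CW$'' into ``every sensor associates $W^\star$ with the same geometric radius''; this is exactly what the uniqueness in $\mR_L$ combined with Corollary~\ref{cor:lem2-bis} provides, since under the hypothesis a Lyndon word in $CW$ cannot come from two different radii (neither via the same orientation, by Lemma~\ref{lemma2-bis}, nor via opposite orientations, by the assumption $\mR_L=\{r_\ell\}$).
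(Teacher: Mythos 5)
Your proof is correct and takes essentially the same approach as the paper, whose proof is just the one-liner ``the leader is the nearest sensor to $\tO$ on $r_\ell$'': since $\mR_L$ is defined symmetrically in the two orientations it is a geometric invariant that all sensors compute identically, so the singleton $\mR_L=\{r_\ell\}$ already gives agreement on the radius, and the nearest-to-$\tO$ rule finishes. One minor overstatement: $CW$ need not contain a \emph{unique} Lyndon word, since $W(r_\ell)^\circlearrowright$ and $W(r_\ell)^\circlearrowleft$ can both be Lyndon yet distinct; but, as your closing paragraph in fact argues, every Lyndon word of $CW$ is attached to $r_\ell$ (by Lemma~\ref{lemma2-bis} and the hypothesis $\mR_L=\{r_\ell\}$), so the identification of the radius, and hence the election, is unaffected.
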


\begin{proof}
The leader is the nearest sensor to $\tO$, on $r_\ell$.
\end{proof}

Note that, in this case, $r_\ell$ is necessarily $0$-symmetric.
%
An example of such a configuration is given by Figure~\ref{fig:SC2}: 
$W^\circlearrowright(r_1) = W^\circlearrowleft(r_1) = (0, c, \beta)(1, ab, \alpha)(1, ab, \beta)$ 
is a Lyndon word.  The leader is the sensor on $r_1$.

\begin{lemma}\label{lem:2Lyndon}
If for all $r \in \mR$, $W(r)^\circlearrowright \neq (0,0,0)$,  $W(r)^\circlearrowleft \neq (0,0,0)$, 
and $R_L=\{r_1, r_2\}$, the $n$ sensors are able to deterministically agree on a same sensor $L$ if 
and only if $r_1$ and $r_2$ are of type $0$.
\end{lemma}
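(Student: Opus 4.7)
The plan is to apply Corollary~\ref{cor:lem2-bis} together with the hypothesis $|\mR_L|=2$ to fix, without loss of generality, that the two Lyndon words available to the sensors are $W(r_1)^\circlearrowright$ and $W(r_2)^\circlearrowleft$: each orientation admits at most one Lyndon-word radius, so the two radii of $\mR_L$ must contribute Lyndon words in opposite orientations.

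For the ``if'' direction, assume both $r_1$ and $r_2$ are of type $0$. The key fact is $W(r_1)^\circlearrowright \neq W(r_2)^\circlearrowleft$: equality would, after projecting each letter onto its $(\rho,\alpha)$ components, yield $\omega(r_1)^\circlearrowright = \omega(r_2)^\circlearrowleft$, which by Definition~\ref{def:type} makes $r_2$ (and symmetrically $r_1$) $1$-symmetric, contradicting the hypothesis. Once the two Lyndon words are known to be distinct, every sensor --- regardless of handedness --- computes the same set $CW$, isolates the two Lyndon words, selects the lexicographically smaller one, reads off the corresponding radius from its leading letter, and elects as leader the sensor on that radius nearest to $\tO$.

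For the ``only if'' direction I argue by contraposition and assume, without loss of generality, that $r_1$ is $1$-symmetric, witnessed by some $r_j \neq r_1$ with $\omega(r_1)^\circlearrowleft = \omega(r_j)^\circlearrowright$. The bisector of $\angle(r_1 \tO r_j)$ is then the axis of a reflection $\sigma$ of the plane preserving the configuration. Since $\sigma$ swaps the two orientations of $SEC$ and preserves every distance, angle, and radius type, the strong-word construction is $\sigma$-equivariant: $W(r)^\circlearrowright = W(\sigma(r))^\circlearrowleft$ for every $r \in \mR$. Applied to $r_2$, this shows that $W(\sigma^{-1}(r_2))^\circlearrowright$ equals the Lyndon word $W(r_2)^\circlearrowleft$ and is therefore itself Lyndon; Corollary~\ref{cor:lem2-bis} forces $\sigma^{-1}(r_2)=r_1$, so $\sigma$ swaps $r_1$ and $r_2$ and, in particular, $W(r_1)^\circlearrowright = W(r_2)^\circlearrowleft$: the two Lyndon words coincide.

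With the two Lyndon words coinciding, all data supplied by $CW$ is $\sigma$-symmetric, and chirality-freeness compels any deterministic algorithm to be $\sigma$-equivariant; its output must therefore be a $\sigma$-fixed sensor. Such a sensor cannot be $\tO$ (excluded by hypothesis) nor lie on $r_1$ or $r_2$ (swapped by $\sigma$), and the hypothesis $\mR_L=\{r_1,r_2\}$ leaves no axis radius contributing a distinguishing Lyndon word. I expect this last step --- rigorously ruling out every deterministic rule that a chirality-free sensor might invoke on the axis of $\sigma$ --- to be the main obstacle. The intended finishing argument is an indistinguishability one: two opposite-handedness executions of any purported algorithm would have to output $\sigma$-conjugate sensors, and since all algorithmic inputs (the entries of $CW$) are $\sigma$-invariant these outputs must in fact be equal, which is impossible unless they lie on the axis; but nothing in $CW$ canonically selects a point on the axis, so no deterministic leader exists.
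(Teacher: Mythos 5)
Your setup (using Corollary~\ref{cor:lem2-bis} to assume w.l.o.g.\ that the two Lyndon words are $W(r_1)^\circlearrowright$ and $W(r_2)^\circlearrowleft$) and your ``if'' direction coincide with the paper's argument, and are even a bit more explicit: you justify $W(r_1)^\circlearrowright \neq W(r_2)^\circlearrowleft$ from Definition~\ref{def:type}, which the paper only asserts. Your route to the bad case is also sound: using the reflection $\sigma$ coming from $1$-symmetry and the equivariance $W(r)^\circlearrowright = W(\sigma(r))^\circlearrowleft$, you get $\sigma(r_1)=r_2$ and $W(r_1)^\circlearrowright = W(r_2)^\circlearrowleft$; the paper reaches the same point by a counting argument (a mixed-type pair would produce a third Lyndon radius, contradicting Corollary~\ref{cor:lem2-bis}). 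These are equivalent.

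The genuine gap is exactly where you flag it: you never rule out a leader on the axis of $\sigma$, and your proposed fix is not a proof and is wrongly premised. The argument ``nothing in $CW$ canonically selects a point on the axis'' fails because the sensors' algorithm is not restricted to $CW$ --- it may use the full set of positions --- and if a sensor did lie on the axis the sensors \emph{could} deterministically elect it (the axis is geometrically determined once $\sigma$ is, and e.g.\ the axis sensor nearest to $\tO$ is a canonical choice). What must be shown is that under the hypotheses no sensor lies on the axis at all, so that $\sigma$ fixes no sensor and the mirror-indistinguishability argument (any elected sensor must be $\sigma$-invariant) closes the proof. This does follow from the hypotheses, but it needs the type bits: if an axis radius $r_a\in\mR$ (i.e.\ $\sigma(r_a)=r_a$) were of type $0$, then $W(r_a)^\circlearrowright$ begins with a letter $(0,\cdot,\cdot)$ while the Lyndon word $W(r_1)^\circlearrowright$ begins with $(1,\cdot,\cdot)$ (since $\sigma(r_1)=r_2\neq r_1$ makes $r_1$ of type $1$), so $W(r_a)^\circlearrowright \prec W(r_1)^\circlearrowright$ by Definition~\ref{def:order2}, contradicting that a Lyndon word is minimal among its rotations (Definition~\ref{def:LW}); and if $r_a$ were of type $1$, then $\omega(r_a)^\circlearrowleft=\omega(r_j)^\circlearrowright$ for some $r_j\neq r_a$ together with $\omega(r_a)^\circlearrowleft=\omega(r_a)^\circlearrowright$ forces two distinct radii to carry the same clockwise word, so the configuration word is strictly periodic and, by Lemma~\ref{lemma:MOT}, no rotation of it is primitive, contradicting the existence of the Lyndon word $W(r_1)^\circlearrowright$. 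With that lemma-level fact in hand (no $\sigma$-fixed sensor, and no sensor at $\tO$ by hypothesis), your equivariance argument does yield the impossibility; the paper's own proof of this case is terser (it asserts that a leader elected ``on $r_1$'' forces one on $r_2$), but it does not rest on the invalid ``$CW$ cannot name an axis point'' step that your write-up leans on.
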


\begin{proof}
Without loss of generality we can consider that $W(r_1)^\circlearrowright$ and $W(r_2)^\circlearrowleft$ are Lyndon words. We have three cases to consider:

\begin{enumerate}[1.]

\item
Radii $r_1$ and $r_2$ cannot be of different types. Assume by
contradiction they can. Without loss of generality we can consider
that $r_1$ is of type $0$ and $r_2$ is of type $1$. So, from Definition~\ref{def:type}, we deduce there is another radius of type $1$ which is also a Lyndon word. So, we would have three radii corresponding to a Lyndon word, which contradicts Corollary~\ref{cor:lem2-bis}.

\item
For the same reason, if $r_1$ and $r_2$ are both $1$-symmetric, then necessarily 
$W(r_1)^\circlearrowright = W(r_2)^\circlearrowleft$. In this case, if an algorithm 
elects a leader on $r_1$ in a deterministical way, this algorithm distinguishes another 
leader on $r_2$---refer to Figure~\ref{fig:CE} for an example of such a configuration. 
So an algorithm cannot allow the $n$ sensors to deterministically agree on a same sensor $L$.

\item
If $r_1$ and $r_2$ are both $0$-symmetric, then  $W(r_1)^\circlearrowright \neq W(r_2)^\circlearrowleft$. Without loss of generality, 
we can consider that $W(r_1)^\circlearrowright < W(r_2)^\circlearrowleft$. Then the $n$ sensors are able to deterministically agree 
on the sensor that is the nearest to $\tO$ on $r_1$.
\end{enumerate}
\end{proof}

\begin{example}

Figure~\ref{fig:CE} shows a configuration where each radius is $1$-symmetric; and so no leader exists. Notice that a case like this one can occur only when the number of sensors is even.

\begin{figure}[H]
\begin{center}
\includegraphics[width=0.35\linewidth]{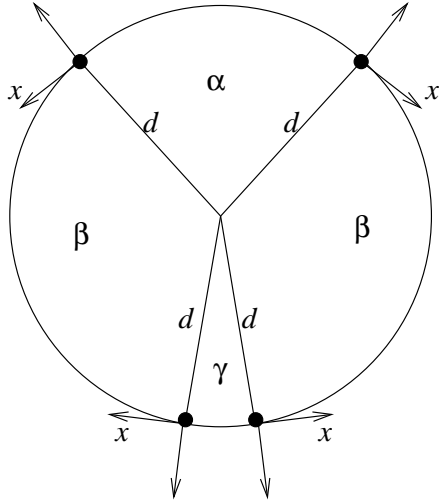}
\end{center}
\caption{An example of configuration where no leader exists.}
\label{fig:CE}
\end{figure}
\end{example}

\begin{example}

Figure~\ref{fig:ex5} shows a configuration as in the last case of the proof of Lemma~\ref{lem:2Lyndon}. We have:
$$W^\circlearrowright(r_1)=(0,c,\alpha)(0, ab, \gamma)(0,ab,\beta) ,\, W^\circlearrowleft(r_1)=(0,c,\beta)(0, ab, \gamma)(0,ab,\alpha)$$
$$W^\circlearrowright(r_2)=(0,ab,\gamma)(0, ab, \beta)(0,c,\alpha) ,\, W^\circlearrowleft(r_2)=(0,ab,\alpha)(0, c, \beta)(0,ab,\gamma)$$
$$W^\circlearrowright(r_3)=(0,ab,\beta)(0, c, \alpha)(0,ab,\gamma) ,\, W^\circlearrowleft(r_3)=(0,ab,\gamma)(0, ab, \alpha)(0,c,\beta)$$

$W^\circlearrowleft(r_2)$ and $W^\circlearrowright(r_3)$ are Lyndon word, $W^\circlearrowleft(r_2)$ is the smallest one, so the leader is the sensor that is the nearest to $\tO$ on $r_2$.

\begin{figure}[!htbp]
\begin{center}
\includegraphics[width=0.35\linewidth]{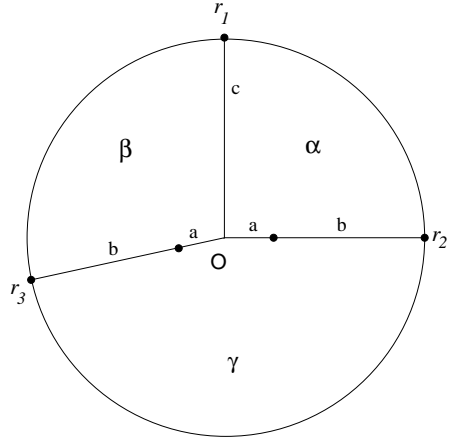}
\end{center}
\caption{The leader is the nearest sensor to $\tO$ on $r_2$.}
\label{fig:ex5}
\end{figure}

\end{example}

\begin{lemma}
If for all $r \in \mR$, $W(r)^\circlearrowright \neq (0,0,0)$,  $W(r)^\circlearrowleft \neq (0,0,0)$, and $R_L=\emptyset$, then there exists no deterministic algorithm allowing the $n$ sensors to agree on a same sensor $L$.
\end{lemma}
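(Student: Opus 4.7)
The plan is to replicate the argument of Lemma~\ref{lemma4}, replacing weak configuration words by strong ones and substituting Remark~\ref{rmk:rotations_nochi} for Remark~\ref{rmk:rotations}. Assume for contradiction that a deterministic algorithm $A$ makes the $n$ sensors agree on a single leader $L$. By hypothesis, for every $r\in\mR$ we have $W(r)^\circlearrowright\neq(0,0,0)$ and $W(r)^\circlearrowleft\neq(0,0,0)$, so by Remark~\ref{rem:CW} no sensor sits at $\tO$; moreover $R_L=\emptyset$ means no word in $CW$ is a Lyndon word.

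Fix an orientation $\circ\in\{\circlearrowright,\circlearrowleft\}$ and let $W_{\min}$ be the $\preceq$-smallest element of $\{W(r)^\circ : r\in\mR\}$. Arguing as in Remark~\ref{rem:rot}, all these words are rotations of one another (directly from Definition~\ref{def:CW} by changing the starting radius). If $W_{\min}$ were primitive, then being minimal among its own rotations would make it a Lyndon word, contradicting $R_L=\emptyset$. Hence $W_{\min}$ is strictly periodic, $W_{\min}=u^k$ with $k\geq 2$, and Lemma~\ref{lemma:MOT} forces every $W(r)^\circ$ to be strictly periodic as well.

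By the period-free property (Remark~\ref{rmk:rotations_nochi}), the configuration $\mathcal{C}$ itself must be periodic: there exists a non-trivial rotation $Rot$ of center $\tO$ with $Rot(\mathcal{C})=\mathcal{C}$. Since no sensor lies at $\tO$, every sensor $s$ is mapped to a distinct sensor $Rot(s)\neq s$; in particular every radius $r$ has a distinct twin $r'\in\mR$ with $W(r)^\circ=W(r')^\circ$. Conclude by a symmetry argument: applying $Rot$ to the whole scene produces an identical scenario, so by determinism $A$ must elect $Rot(L)$ as well, forcing $L=Rot(L)$; but $Rot$ has no sensor as a fixed point, a contradiction.

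The delicate step is this final symmetry argument, because in the chirality-free setting different sensors may disagree on what ``clockwise'' means, so it is not immediate that $s$ and $Rot(s)$ have indistinguishable views. This is rescued by the fact that $Rot$ is a direct (orientation-preserving) isometry: it commutes with each sensor's individual handedness choice, so the local view of $Rot(s)$ is exactly the $Rot$-image of the local view of $s$, and the deterministic output of $A$ is transported accordingly. Everything else reduces to a bookkeeping translation of the chirality proof into the strong-word setting.
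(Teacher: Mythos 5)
Your proof is correct and follows essentially the same route as the paper's: the $\preceq$-minimal strong configuration word must be strictly periodic (else it would be a Lyndon word, contradicting $\mR_L=\emptyset$), the period-free property (Remark~\ref{rmk:rotations_nochi}) then makes the configuration itself periodic, and the resulting nontrivial rotational symmetry with no fixed sensor rules out deterministic election. Your final remark about the rotation being a direct isometry merely spells out the indistinguishability step that the paper states informally via the twin radius $r'$ with $W(r)=W(r')$.
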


\begin{proof}
Assume by contradiction that there exists no radius $r \in \mR$ on the center of $SEC$ or such that $W^\circlearrowleft(r)$ or $W^\circlearrowright(r)$ is a Lyndon word and that there exists an algorithm $A$ allowing the $n$ sensors to deterministically agree on a same sensor $L$.
Let $min^\circ_W$ be a word in $CW$ such that $\forall r \in \mR$, $min^\circ_W \preceq W(r)$. Suppose w.l.o.g. that  $\circ=\circlearrowleft$. That is, $min^\circlearrowleft_W$ is minimal. Assume first that $min^\circlearrowleft_W$ is primitive. Then it is a Lyndon word that contradicts the assumption.
So, $min^\circlearrowleft_W$ is a strictly periodic word and from Lemma~\ref{lemma:MOT}, we deduce that for all $r \in \mR$, $w^\circlearrowleft(r)$ is also strictly periodic. Since by Remark~\ref{rmk:rotations_nochi} our construction is period-free, this implies that the configuration is periodic. Thus, for every $r \in \mR$, there exists at least one radius $r' \in \mR$ such that $r \neq r'$ and $W^\circlearrowleft(r)=W^\circlearrowleft(r')$.

So if an algorithm elects a leader on $r$ in a deterministical way, this algorithm distinguishes another leader on $r'$. In that case, $A$ cannot allow the $n$ sensors to deterministically agree on a same sensor $L$. 
\end{proof}
  

The four previous lemmas lead to:

\begin{theorem}\label{theorem2}
Given a configuration ${\mathcal{C}}$ of any number $n \geq 2$ of disoriented sensors without chirality scattered on the plane, the
$n$ sensors are able to deterministically agree on a same sensor $L$ if and only if there exists $W(r) \in
CW$ such that $r$ is of type 0 and $W(r)$ is a Lyndon word of type $0$.
\end{theorem}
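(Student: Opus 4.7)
The plan is to derive Theorem~\ref{theorem2} as a direct consequence of the four preceding lemmas, via a case analysis on the cardinality of $\mR_L$ combined with whether some sensor occupies $\tO$. Observe first that the hypothesis of the theorem is a conjunction about a single radius: ``$r$ of type $0$ and $W(r)$ a Lyndon word of type $0$''; by Definition~\ref{def:CW} the first letter of $W(r)$ has the type of $r$ as its first coordinate, so ``$W(r)$ of type $0$'' and ``$r$ of type $0$'' are equivalent conditions, which streamlines the bookkeeping.

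I would first dispose of the degenerate situation where some sensor sits at $\tO$. By Remark~\ref{rem:CW}, this is equivalent to $CW=\{(0,0,0)\}$; the unique length-one word $(0,0,0)$ is trivially a Lyndon word of type $0$ sitting on a (vacuously) type-$0$ radius, and the sensor at $\tO$ is the agreed leader provided by Lemma~\ref{lemma8}, so both sides of the equivalence hold. From now on assume no sensor is at $\tO$, so by Remark~\ref{rem:CW} every $W(r)^{\circlearrowright}$ and $W(r)^{\circlearrowleft}$ differs from $(0,0,0)$.

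Next I would rule out $\sharp\mR_L\geq 3$: if three distinct radii each admit a Lyndon strong configuration word in some orientation, then the pigeonhole principle produces two of them, say $r$ and $r'$, whose Lyndon word is obtained with the \emph{same} orientation $\circ\in\{\circlearrowright,\circlearrowleft\}$; Lemma~\ref{lemma2-bis} then forces $CW=\{(0,0,0)\}$, contradicting the assumption. Hence $\sharp\mR_L\in\{0,1,2\}$ and the three corresponding lemmas of Subsection~\ref{sub:NOCHI&LE&LW} partition the remaining cases. The forward direction is then handled by contraposition: if no type-$0$ Lyndon radius exists, then either $\sharp\mR_L=0$ (last lemma of the subsection forbids election), or $\sharp\mR_L=2$ with both radii of type $1$ (second case of Lemma~\ref{lem:2Lyndon} forbids election). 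The backward direction invokes the existence lemmas: if $\sharp\mR_L=1$ the lemma preceding Lemma~\ref{lem:2Lyndon} elects the sensor nearest to $\tO$ on the unique Lyndon radius, and if $\sharp\mR_L=2$ with both radii of type $0$, the third case of Lemma~\ref{lem:2Lyndon} elects the leader.

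The delicate step I expect to be the main obstacle is justifying that when $\sharp\mR_L=1$ the unique Lyndon radius $r_\ell$ must be of type $0$, so that the backward direction of the theorem really applies. The argument is that if $r_\ell$ were of type $1$, then by Definition~\ref{def:type} there would exist a distinct radius $r_j$ paired to $r_\ell$ by an axial symmetry; this symmetry, being an orientation-reversing isometry of $SEC$, preserves the type-of-symmetry labels of every radius while swapping $\circlearrowright$ with $\circlearrowleft$, so $W(r_j)^{\circlearrowleft}=W(r_\ell)^{\circlearrowright}$ (or the analogous identity with the orientations exchanged), making $W(r_j)$ a Lyndon word too and contradicting $\sharp\mR_L=1$. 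The same mirror principle is what makes the both-of-type-$1$ subcase of $\sharp\mR_L=2$ in Lemma~\ref{lem:2Lyndon} genuinely impossible to break deterministically. Once these symmetry-preservation observations are granted, the theorem follows by gluing the four lemmas together.
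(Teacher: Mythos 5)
Your proposal is correct and follows essentially the same route as the paper, whose proof of Theorem~\ref{theorem2} is precisely the assembly of the four preceding lemmas via a case analysis on $\sharp \mR_L$ (with Lemma~\ref{lemma8} covering the sensor-at-$\tO$ case and Lemma~\ref{lemma2-bis}/Corollary~\ref{cor:lem2-bis} excluding $\sharp\mR_L>2$). Your mirror-symmetry argument for the ``delicate step'' is exactly the justification behind the paper's unproved side remark that when $\mR_L=\{r_\ell\}$ the radius $r_\ell$ is necessarily $0$-symmetric, so you have in fact filled in a detail the paper leaves implicit.
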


\begin{remark}
Notice that our construction does not necessarily give the same leader for one given configuration whether there is chirality or not. For instance in Figure~\ref{fig:SC2}, with chirality, if the orientation is $\circlearrowleft$, the chosen leader is the nearest sensor to $\tO$ on $r_3$ and if the orientation is $\circlearrowright$, it is the nearest sensor to $\tO$ on $r_2$. Without chirality, the chosen leader is the sensor on $r_1$.
\end{remark}

As noticed in this section, the (weak) configuration words built in
Section~\ref{sec:chi} (Definition~\ref{def:cw}) can be easily defined in terms of strong configuration words
(Definition~\ref{def:CW}) by mapping each pair $(\rho,\alpha)$ to $(0,\rho,\alpha)$. 
Thus, Theorem~\ref{theorem1} also holds over the strong configuration words built on $CW$ assuming chirality. 

So, we can formulate Theorems~\ref{theorem1} and~\ref{theorem2} more generally:

\begin{theorem}
Given a configuration $C$ of any number $n \geq 2$ of disoriented sensors scattered on the plane and a period-free construction of strong configuration words $w$, the $n$ sensors are able to deterministically agree on a same sensor $L$ if and only if there exists a Lyndon word of type $0$ among the configuration words.
\end{theorem}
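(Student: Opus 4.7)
The plan is to view this theorem as a clean unification of Theorems~\ref{theorem1} and~\ref{theorem2}. The chirality case embeds into the strong-word framework via the mapping $(\rho,\alpha) \mapsto (0,\rho,\alpha)$ noted just before the statement: under this embedding every radius is trivially of type~$0$ and every Lyndon word is a Lyndon word of type~$0$, so Theorem~\ref{theorem1} becomes a special instance. Consequently I would give a single proof that only invokes (a)~the uniqueness machinery encapsulated in Lemma~\ref{lemma2-bis} and Corollary~\ref{cor:lem2-bis}, (b)~Definition~\ref{def:type} of the type of a radius, and (c)~the period-free hypothesis on the construction. The two directions then recycle, in order, the arguments of Lemma~\ref{lemma3}/Lemma~\ref{lem:2Lyndon} (sufficiency) and of Lemma~\ref{lemma4} together with the final lemma of Section~\ref{sec:nochi} (necessity).

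For sufficiency, assume a Lyndon word of type~$0$ exists in $CW$. If some word equals $(0,0,0)$, Remark~\ref{rem:CW} places a sensor at $\tO$ and it is declared leader. Otherwise, by Corollary~\ref{cor:lem2-bis} the radius carrying a Lyndon word in orientation $\circlearrowright$ is unique and likewise for $\circlearrowleft$. If exactly one orientation yields a type-$0$ Lyndon word on a radius $r$, the leader is the sensor on $r$ nearest to $\tO$. If both orientations yield type-$0$ Lyndon words on radii $r_1,r_2$, then the type-$0$ status of both radii forbids $W(r_1)^\circlearrowright = W(r_2)^\circlearrowleft$ by Definition~\ref{def:type}; hence the two Lyndon words are distinct, and each sensor independently chooses the lexicographically smaller one and takes the nearest sensor to $\tO$ on its radius.

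For necessity, assume no Lyndon word of type~$0$ is present in $CW$ and that, contrary to the claim, a deterministic algorithm $A$ elects a unique leader. By Remark~\ref{rem:CW} no sensor sits at $\tO$. Let $m$ be $\preceq$-minimal in $CW$. If $m$ is primitive, then $m$ is itself a Lyndon word and must therefore be of type~$1$; Definition~\ref{def:type} then supplies a distinct radius $r'$ carrying the same word in the opposite orientation, and the induced axial symmetry maps the leader elected by $A$ to a different sensor, contradicting uniqueness. Otherwise $m = u^k$ with $k>1$; Lemma~\ref{lemma:MOT} propagates strict periodicity to every rotation of $m$, so every word in the orbit of $m$ is strictly periodic, and the period-free hypothesis forces the configuration to be rotationally periodic, producing two distinct radii with identical configuration words and again breaking determinism.

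The main obstacle I expect is the two-orientation subcase of the sufficiency direction: it leans on the fact that type~$0$ on both $r_1$ and $r_2$ genuinely rules out $W(r_1)^\circlearrowright = W(r_2)^\circlearrowleft$, which in turn depends on the abstract ``strong configuration words'' construction faithfully encoding the $0/1$ dichotomy of Definition~\ref{def:type}. Making this rigorous in the general ``period-free construction'' setting really amounts to asserting an implicit compatibility hypothesis on the construction (it must detect axial symmetry through the type field); once that is granted, the argument above is essentially a synthesis of the proofs already given in Sections~\ref{sec:chi} and~\ref{sec:nochi}.
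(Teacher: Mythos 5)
Your proposal is correct and follows essentially the paper's own route: the paper establishes this final statement purely as a unification of Theorems~\ref{theorem1} and~\ref{theorem2} via the embedding $(\rho,\alpha)\mapsto(0,\rho,\alpha)$, which is exactly your opening paragraph, and your merged sufficiency/necessity argument simply re-runs the same ingredients (Corollary~\ref{cor:lem2-bis}, Lemma~\ref{lem:2Lyndon}, Lemma~\ref{lemma4} and the final lemma of Section~\ref{sec:nochi}). The only point worth tightening is the necessity step asserting that the axial symmetry moves the elected leader: this holds because a $\preceq$-minimal word of type~$1$ rules out any type-$0$ radius (a type-$0$ first letter would be strictly smaller), hence no sensor lies on the symmetry axis --- a detail the paper leaves equally implicit.
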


\section{Conclusion}
\label{sec:conclu}

In this paper, we studied the leader election problem in networks of anonymous sensors sharing no kind of
common coordinate system. Because of the anonymity, the problem is impossible to solve in a deterministic way,
in general. Nevertheless, some specific geometric configurations allows to elect a leader. Using properties of Lyndon words, we give a complete characterization on the sensors positions to deterministically 
elect a leader for any number $n>1$ of anonymous sensors, assuming chirality or not.
This result is based on a specific coding of planar configurations
involving Lyndon words which ensures to keep the properties allowing to distinguish a unique leader, if any, while not creating new ones.
In the light of our study, the leader provided by our construction turns out to be one of the sensors closest to the center of $SEC$ among either all the sensors or only those that are located on the unique symmetry axis of a configuration. In the case where the sensors share a common handedness, this feature has been used to facilitate the resolution of the arbitrary pattern formation problem as it is shown in \cite{DieudonnePV10}. We conjecture that it can also be used in the case where the agents do not share a common handedness.
However it should be noted that, according to the distributed tasks to be solved, it might be sometimes more judicious to obtain a leader having other geometrical characteristics. This is particularly the case for the flocking tasks among mobile agents (which consist in moving as a group while following a leader) which a leader initially closest to the boundary of $SEC$ would be more adapted to achieve.
Therefore, it is worth emphasizing that we may slightly modify our coding in order to take into account the following distributed task to be achieved: For example by giving higher priority to angles instead of distance or more generally, by changing the order over the set of letters.

As a future work, we will concentrate on finding similar characterizations for other collaborative tasks in mobile sensor networks such as localization problem for which we know that no solution exists in general \cite{DieudonneLP10}. We will also study how the solvability of the election problem impacts the solvability of other fundamental distributed tasks in sensor networks.
\bibliographystyle{alpha} 
\bibliography{ngon}

\end{document}